\documentclass[11pt]{article}
\usepackage[T1]{fontenc}
\usepackage[utf8]{inputenc}
\usepackage{amsmath,amssymb,amsthm}
\usepackage{mathtools}
\usepackage{microtype}
\usepackage{aliascnt}
\usepackage[hidelinks]{hyperref}
\newtheorem{theorem}{Theorem}[section]
\newaliascnt{lemma}{theorem}
\newtheorem{lemma}[lemma]{Lemma}
\aliascntresetthe{lemma}
\newaliascnt{corollary}{theorem}
\newtheorem{corollary}[corollary]{Corollary}
\aliascntresetthe{corollary}

\usepackage[nameinlink,noabbrev]{cleveref}
\crefname{lemma}{lemma}{lemmas}
\Crefname{lemma}{Lemma}{Lemmas}
\crefname{corollary}{corollary}{corollaries}
\Crefname{corollary}{Corollary}{Corollaries}
\crefname{remark}{remark}{remarks}
\Crefname{remark}{Remark}{Remarks}

\newcommand{\paperauthor}{Hangyu Liu}
\newcommand{\paperaffiliation}{Independent Researcher}
\newcommand{\paperemail}{ylzz1997@outlook.com}
\newcommand{\Tset}{\mathcal T}
\newcommand{\Rtree}{\mathcal R}
\newcommand{\eps}{\varepsilon}
\newcommand{\ip}[2]{\left\langle #1,#2\right\rangle}

\title{A Paturi Theorem for Signed Subcube Representations}
\author{\paperauthor\\\paperaffiliation\\\texttt{\paperemail}}
\date{}

\begin{document}
\maketitle

\begin{abstract}
For a symmetric Boolean function $f_F(x)=F(|x|)$, let
\[
 D(F)=\max_{t:F(t-1)\ne F(t)}\min\{t,n-t+1\}
\]
be the depth of its deepest transition.  A generalized monomial is a subcube
indicator.  We prove the quantitative upper bounds
\[
\begin{aligned}
 \widetilde{\operatorname{gwt}}_\eps(f_F)
 &\le 2^{O(D(F)\log(2/\eps))},\\
 \widetilde{\operatorname{gspar}}_\eps(f_F)
 &\le 2^{O(D(F)\log(2/\eps))}\log(n+1).
\end{aligned}
\]
For every fixed $0<\eps<1/2$, the first bound is optimal:
\[
 \widetilde{\operatorname{gwt}}_\eps(f_F)
 =2^{\Theta_\eps(D(F))}
\]
for every nonconstant symmetric function, uniformly in $n$ and without any
restriction on its transition pattern.  For $D(F)\ge2$, we also prove
\[
 \log\!\left(\widetilde{\operatorname{gspar}}_\eps(f_F)+1\right)
 =\Theta_\eps\!\left(D(F)+\log\log(n+2)\right),
\]
whereas $D(F)=1$ implies exact sparsity at most three.

The lower bound uses an exponential restriction-profile transfer and a
leaf-dependent completion that centers a deepest transition.  Its dual form
lifts approximate-degree witnesses from the leaves and yields a signed
measure with constant target correlation but
$2^{-\Omega(D(F))}$ correlation with every subcube indicator.  For the upper
bound, every exponential profile $r^w$ has generalized weight at most one.
A one-shot Jackson approximation handles an arbitrary edge pattern at
constant error; a ternary output amplifier and an empirical sparsification
lemma give the stated quantitative bounds.  As a consequence of Paturi's
theorem and quantum approximate counting,
\[
 \log\!\left(\widetilde{\operatorname{gwt}}_\eps(f_F)+1\right)
 =\Theta_\eps\!\left(\frac{Q_{1/3}(f_F)^2}{n}\right).
\]
\end{abstract}

\section{Introduction}\label{sec:intro}

Real polynomial representations of Boolean functions are measured not only
by degree, but also by the number and total coefficient mass of their
monomials.
Both quantities depend on the chosen dictionary.
This paper studies the dictionary of all generalized monomials
\[
 m_{P,N}(x)=\prod_{i\in P}x_i\prod_{j\in N}(1-x_j),
 \qquad P\cap N=\varnothing,
\]
which are exactly the indicators of subcubes of the Boolean cube.
We ask how economically a Boolean function can be approximated pointwise by
a signed linear combination of these indicators.

Symmetric functions already exhibit two distinct sources of complexity.
Write $f_F(x)=F(|x|)$.
The first is the depth $D(F)$ of the deepest transition of the univariate
profile $F$.
An edge predicate such as \textsc{Or} has $D(F)=1$, while majority, parity,
and a central exact layer have $D(F)=\Theta(n)$.
The second source is the ambient dimension itself: even when $D(F)$ is a
fixed integer at least two, a sparse representation must still distinguish
many coordinates.

Our first result is a complete classification of approximate generalized
weight:
\begin{equation}
 \widetilde{\operatorname{gwt}}_\eps(f_F)
 =2^{\Theta_\eps(D(F))}
 \qquad(0<\eps<1/2\text{ fixed}).
\label{eq:intro-weight}
\end{equation}
The constants are uniform in $n$ and in the number and arrangement of the
transitions of $F$.
Quantitatively, the upper bound is
$2^{O(D(F)\log(2/\eps))}$.
Thus approximation removes all ambient-dimension dependence from
generalized weight as soon as the error is positive.

For generalized sparsity, an ambient-dimension term is unavoidable.
We prove
\begin{equation}
 \log\!\left(\widetilde{\operatorname{gspar}}_\eps(f_F)+1\right)
 =\Theta_\eps\!\left(D(F)+\log\log(n+2)\right)
\label{eq:intro-sparsity}
\end{equation}
whenever $D(F)\ge2$.
The upper bound before taking logarithms is
\[
 2^{O(D(F)\log(2/\eps))}\log(n+1).
\]
The dimension term is witnessed by a coordinate-signature argument:
if an approximant uses $s$ generalized monomials, every coordinate receives
a signature in $\{+,-,0\}^s$, and a signature class larger than $D(F)$
cannot distinguish the two layers adjacent to a deepest transition.
Hence $n\le D(F)3^s$.
This also reveals a sharp change at $D(F)=1$, where every symmetric function
has an exact three-term representation.

\paragraph{Lower-bound method.}
Chattopadhyay, Dahiya, and Lovett introduced restriction trees for sparsity
and used them to relate exact and approximate generalized sparsity and
weight~\cite{ChattopadhyayDahiyaLovett2026}.
For a generalized monomial $M$ and a random leaf $\rho$ of a two-sided
restriction tree, we use the exponential potential
\[
 Z_\rho(M)=
 2^{\deg(M|_\rho)}\mathbf 1[M|_\rho\not\equiv0].
\]
The three branches of a queried literal contribute $0,1,2$, so
$\mathbb E_\rho Z_\rho(M)\le1$.
We query $D(F)$ coordinates and use the leaf-dependent completion to center
one deepest transition in the residual symmetric function.
Paturi's theorem~\cite{Paturi1992} then supplies approximate degree linear
in the number of free variables.
Taking an exponential moment yields the lower side of
\eqref{eq:intro-weight} and \eqref{eq:intro-sparsity}.

We also give a dual form of this transfer.
An approximate-degree dual witness at each leaf can be lifted and averaged
with weight $2^{k(\rho)}$.
The one-monomial estimate makes the resulting global witness bounded on
every subcube indicator.
Using explicit symmetric dual polynomials such as those of Bun and
Thaler~\cite{BunThaler2015}, this gives an explicit signed measure with
constant target correlation and exponentially small subcube discrepancy.
Equivalently, it yields an agnostic labeled distribution on which the target
symmetric predicate has constant edge but every conjunction has
exponentially small edge.

\paragraph{Upper-bound method.}
After symmetrization, generalized weight is an atomic norm generated by
hypergeometric subcube profiles.
The unit ball contains every exponential sequence $r^w$.
To approximate an arbitrary edge pattern, set $z=e^{-w/D}$ and interpolate
the desired values on the nodes $z_w=e^{-w/D}$.
The nodes are $\Omega(1/D)$ apart, so the interpolant is $O(D)$-Lipschitz
regardless of how often the profile oscillates.
Jackson approximation and a Chebyshev coefficient estimate give a
constant-error signed exponential sum of mass $2^{O(D)}$ for the entire
edge profile at once.

Applying Jackson approximation directly at error $\eps$ would give the
weaker dependence $2^{O(D/\eps)}$.
Instead we amplify the output.
The edge corrections take values in $\{-1,0,1\}$, and a majority polynomial
separates constant neighborhoods of these three values with degree
$O(\log(2/\eps))$.
Products of exponential profiles remain exponential profiles, so the
amplified mass is $2^{O(D\log(2/\eps))}$.
Finally, a probabilistic construction replaces every profile $r^{|x|}$ by a
short average of actual subcube indicators.
Only inputs on the first $O(\log(1/\delta)/(-\log r))$ layers require a union
bound; monotonicity controls every later layer.
This is the source of the single factor $\log(n+1)$ in generalized
sparsity.

\paragraph{A query-complexity form.}
Paturi's characterization gives
$\widetilde{\deg}_{1/3}(f_F)=\Theta(\sqrt{nD(F)})$.
The polynomial method and quantum approximate counting give the same formula
for bounded-error quantum query complexity.
Consequently,
\[
 \log\!\left(\widetilde{\operatorname{gwt}}_\eps(f_F)+1\right)
 =\Theta_\eps\!\left(Q_{1/3}(f_F)^2/n\right).
\]
Thus, within the symmetric class, signed-subcube weight is the exponential
of squared quantum query complexity per input coordinate.

\paragraph{Prior work.}
Nisan and Szegedy initiated the systematic study of exact and approximate
degree~\cite{NisanSzegedy1994}, and Paturi characterized approximate degree
for symmetric Boolean functions~\cite{Paturi1992}.
Sherstov developed approximation and inclusion--exclusion constructions for
arbitrary symmetric predicates~\cite{Sherstov2009}.
The polynomial method in quantum query complexity is due to Beals,
Buhrman, Cleve, Mosca, and de Wolf~\cite{BealsEtAl2001}; we use the
approximate-counting algorithm of Brassard, H{\o}yer, Mosca, and
Tapp~\cite{BrassardHoyerMoscaTapp2002}.

Most directly, the present work builds on the two-sided restriction-tree
framework of Chattopadhyay, Dahiya, and
Lovett~\cite{ChattopadhyayDahiyaLovett2026}.
Our contributions are the exponential-profile transfer and its dual form,
the leaf-dependent centering of a deepest symmetric transition, and the
matching upper-bound machinery specific to the symmetric subcube dictionary.
In particular, the one-shot edge approximation removes any dependence on the
number of transitions.

\section{Definitions and main results}\label{sec:results}

All logarithms are to base two unless a natural logarithm is written as
$\ln$.
Constants in $O_\eps,\Omega_\eps,\Theta_\eps$ may depend on the fixed
approximation error and on no other parameter.

For $g:\{0,1\}^n\to\mathbb R$, define
\[
 \operatorname{gspar}(g)
 =\min\left\{s:
 g=\sum_{j=1}^s a_jm_{P_j,N_j},\ a_j\ne0\right\},
\]
where the empty sum is allowed, and
\[
 \operatorname{gwt}(g)
 =\min\left\{\sum_j|a_j|:
 g=\sum_j a_jm_{P_j,N_j}\right\}.
\]
For $0\le\eps<1/2$, put
\[
 \widetilde{\operatorname{gspar}}_\eps(g)
 =\min_{\|p-g\|_\infty\le\eps}\operatorname{gspar}(p),
 \qquad
 \widetilde{\operatorname{gwt}}_\eps(g)
 =\min_{\|p-g\|_\infty\le\eps}\operatorname{gwt}(p).
\]

For a Boolean function $h$ and $\eta\ge0$, its $\eta$-approximate degree is
\[
 \widetilde{\deg}_\eta(h)
 =\min\{\deg q:\|q-h\|_\infty\le\eta\},
\]
where polynomials may be taken multilinear on the Boolean cube.

For $F:\{0,1,\ldots,n\}\to\{0,1\}$, write $f_F(x)=F(|x|)$ and define
\[
 \Tset(F)=\{t\in[n]:F(t-1)\ne F(t)\},
\]
\[
 D(F)=\max_{t\in\Tset(F)}\min\{t,n-t+1\},
\]
with $D(F)=0$ when $F$ is constant.

\begin{theorem}[Complete symmetric weight classification]
\label{thm:weight-classification}
There is an absolute constant $C>0$ such that, for every
$0<\eps<1/2$, every symmetric Boolean function satisfies
\begin{equation}
 \widetilde{\operatorname{gwt}}_\eps(f_F)
 \le 2^{CD(F)\log(2/\eps)}.
\label{eq:weight-upper}
\end{equation}
For every fixed $0<\eps<1/2$, there are constants
$c_\eps>0$ and $C_\eps\ge0$ such that every nonconstant symmetric Boolean
function satisfies
\begin{equation}
 \widetilde{\operatorname{gwt}}_\eps(f_F)
 \ge 2^{c_\eps D(F)-C_\eps}.
\label{eq:weight-lower}
\end{equation}
Consequently,
\[
 \widetilde{\operatorname{gwt}}_\eps(f_F)
 =2^{\Theta_\eps(D(F))}
\]
for every nonconstant symmetric $F$, uniformly in $n$ and in its transition
pattern.
\end{theorem}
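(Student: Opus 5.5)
We prove the upper bound \eqref{eq:weight-upper} and the lower bound \eqref{eq:weight-lower} separately, following the two methods sketched in the introduction.

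\emph{Upper bound.} Write $D=D(F)$. First we would reduce to edge profiles. Every transition $t$ satisfies $t\le D$ or $t\ge n-D+1$. Hence $F$ is constant on the middle range $D\le w\le n-D$, with value $c\in\{0,1\}$, and
\[
F=c+E_{\mathrm{low}}+E_{\mathrm{high}},
\]
where $E_{\mathrm{low}}$ is supported on $\{0,\dots,D-1\}$ and $E_{\mathrm{high}}$ on $\{n-D+1,\dots,n\}$, both with values in $\{-1,0,1\}$. The map $x\mapsto\bar x$ turns subcube indicators into subcube indicators of the same weight. So it suffices to treat $E_{\mathrm{low}}$, and the error budget splits into two halves.

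The basic atom is the profile $r^{|x|}$ for $r\in[0,1]$. It equals $\mathbb E_S\,m_{S,\varnothing}$ when every coordinate is placed in $S$ independently with probability $r$? No: that expectation is $r^{|S\cap\ldots|}$ in the wrong direction. The correct identity is $\prod_i(1-(1-r)x_i)$, the expectation of $m_{\varnothing,N}$ when each $i$ enters $N$ with probability $1-r$. This gives $\operatorname{gwt}(r^{|x|})\le1$.

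Next we substitute $z=e^{-w/D}$ and seek a polynomial $q(z)=\sum_k b_kz^k$ with three properties:
\begin{itemize}
\item $q(e^{-w/D})\approx E_{\mathrm{low}}(w)$ for $w<D$;
\item $q(e^{-w/D})\approx0$ for $w\ge D$, in particular for $z\in[0,e^{-1}]$;
\item $\sum_k|b_k|\le 2^{O(D)}$.
\end{itemize}
Each monomial $z^k$ corresponds to the profile $(e^{-k/D})^{|x|}$, which has generalized weight at most $1$.

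To build $q$, take the piecewise-linear interpolant on $[0,1]$ of the nodes $z_w=e^{-w/D}$, $w=0,\dots,D$, set to $0$ on $[0,z_D]$. The nodes are $\Omega(1/D)$ apart, so this function is $O(D)$-Lipschitz. Jackson's theorem then gives a polynomial of degree $O(D)$ within error $1/10$. Its Chebyshev coefficients are bounded, and converting from the Chebyshev basis on $[0,1]$ to the monomial basis costs a factor $2^{O(D)}$. This yields constant error with mass $2^{O(D)}$.

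To reach error $\eps$, compose with an amplifier. Let $A$ be a univariate polynomial of degree $O(\log(2/\eps))$ that maps neighborhoods of $-1,0,1$ into $\eps$-neighborhoods of $-1,0,1$, respectively, and whose coefficients on $[-2,2]$ are bounded by $2^{O(\log(2/\eps))}$. Apply $A$ to the symmetrized value. Products of exponential profiles are exponential profiles, since $r^wr'^w=(rr')^w$. Therefore mass multiplies, and we obtain a total mass of $2^{O(D\log(2/\eps))}$.

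\emph{Lower bound.} Here we use a restriction tree and a potential argument. Suppose $p=\sum_ja_jm_j$ satisfies $\|p-f_F\|_\infty\le\eps$, and fix a deepest transition $t$; by symmetry assume $t\le n-t+1$, so $D=t$.

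Build a random two-sided restriction that queries $D$ coordinates in sequence. At each query the three branches are:
\begin{itemize}
\item fix $x_i=0$;
\item fix $x_i=1$;
\item leave $x_i$ free, identified with another live variable, or fixed so as to preserve balance.
\end{itemize}
The design goal is that, under the potential $Z_\rho(M)=2^{\deg(M|_\rho)}\mathbf 1[M|_\rho\not\equiv0]$, the branches contribute $0,1,2$ with equal probability, giving $\mathbb E_\rho Z_\rho(M)\le1$. The completion is leaf dependent: after the queries we fix further coordinates so that the restricted function is symmetric in $m=\Omega(n)$ free variables, with the transition at depth about $m/2$.

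By Paturi's theorem the restricted function then has approximate degree $\Omega(m)$ at error $\eps$. We would rather use the version that is linear in the number of free variables $k(\rho)$, namely degree $\ge c_\eps k(\rho)$. This holds because the restricted function is a centered transition in $k(\rho)$ variables, with $k(\rho)$ roughly the number of free branches.

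Since $p|_\rho$ must approximate the restricted function, some monomial $M$ with $a_M\neq0$ must have $\deg(M|_\rho)\ge c_\eps k(\rho)$. Making this quantitative requires a weighted, dual version of the step: combine the degree lower bound with an exponential moment of $k(\rho)$ and compare against $\mathbb E_\rho Z_\rho(M)\le1$. Choosing the free branch with constant probability makes $k(\rho)$ of order $D$ with overwhelming probability, which yields $\sum_j|a_j|\ge 2^{c_\eps D-C_\eps}$.

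\emph{Main obstacle.} The main difficulty is designing the leaf-dependent completion. It must center the deepest transition using only $D$ queries while keeping $\mathbb E_\rho Z_\rho\le1$ for every subcube, and it must make the Paturi degree scale linearly with the potential exponent. My first attempt would be pairing: each queried coordinate is matched with an unqueried one fixed oppositely. This keeps the transition centered regardless of the $0/1$ outcomes.
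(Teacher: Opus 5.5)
Your upper bound is the paper's argument and it is fine: constant middle value plus two $\{-1,0,1\}$ edge profiles, unit-weight atoms $r^{|x|}$, the substitution $z=e^{-w/D}$ with a piecewise-linear interpolant, Jackson plus Chebyshev-to-monomial conversion, and a ternary amplifier whose powers stay in the exponential dictionary. The lower bound has the right skeleton, but the construction you write down fails.

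You complete each leaf so that $m=\Omega(n)$ variables stay free, with a transition at depth about $m/2$ and Paturi degree $\Omega(m)$. Neither half is possible.
\begin{itemize}
\item A free variable that was never queried is not paid for by a probability-$1/3$ branch. If $U$ is the set of such variables at one leaf, then $M=\prod_{i\in U}x_i$ has $\mathbb E_\rho Z_\rho(M)\ge3^{-D}2^{|U|}$. This is huge once $n\gg D$, so the one-monomial estimate is lost. ``Identifying'' $x_i$ with another variable is likewise not a restriction-tree branch and is not covered by the $0,1,2$ accounting.
\item With $a$ coordinates fixed to one, a residual transition at $s$ is a transition of $F$ at $a+s$. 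So $\min\{s,m-s+1\}\le\min\{a+s,n-a-s+1\}\le D$. Centering therefore forces $m\lesssim2D$, and for $m\gg D$ Paturi gives only $O(\sqrt{mD})$.
\end{itemize}
What you need, and what your later phrase ``$k(\rho)$ roughly the number of free branches'' points to, is that exactly the $*$-queried coordinates stay free, so $R\sim\operatorname{Bin}(D,1/3)$, while every unqueried coordinate is fixed. The paper's \cref{lem:central-completion} sets exactly $c=t-a-\lceil r/2\rceil$ of the $n-D$ unqueried coordinates to one, where $a$ is the number of queried coordinates fixed to one. The residual is then $F(t-\lceil r/2\rceil+|z|)$, and $0\le c\le n-D$ follows from $t\ge D$ and $n-t+1\ge D$. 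Your pairing only makes the number of fixed ones depend on $r$ alone. It centers the transition only once you also set exactly $\lfloor r/2\rfloor$ of the remaining completed coordinates to one (for $t=D$). It also needs $D$ unqueried partners, which do not exist when $n=2D-1$.

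The second gap is the passage from degree to weight, which you name but do not carry out. ``Some monomial has $\deg(M|_\rho)\ge c_\eps k(\rho)$'' only bounds the number of monomials. A weight bound needs slack in the error: take $k(\rho)=\widetilde{\deg}_{\eps'}(f_F|_\rho)$ with $\eps<\eps'<1/2$, not degree at error $\eps$. Delete from $p|_\rho$ every nonvanishing $M_j|_\rho$ of degree at least $k(\rho)$. This leaves a polynomial of degree below $k(\rho)$ and moves $p|_\rho$ by at most the deleted coefficient mass, since each $M_j|_\rho$ is $\{0,1\}$-valued. Hence that mass is at least $\eps'-\eps$, and $(\eps'-\eps)2^{k(\rho)}\mathbf 1[k(\rho)\ge1]\le\sum_j|a_j|Z_\rho(M_j)$. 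Average over leaves, using $\mathbb E_\rho Z_\rho(M_j)\le1$ and $k(\rho)\ge\alpha_{\eps'}R$. This gives $\sum_j|a_j|\ge(\eps'-\eps)\bigl[\bigl(\tfrac{2+2^{\alpha_{\eps'}}}3\bigr)^D-\bigl(\tfrac23\bigr)^D\bigr]\ge2^{c_\eps D-C_\eps}$. The paper obtains the same inequality by lifting LP-dual witnesses from the leaves (\cref{thm:dual-transfer}).
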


\begin{theorem}[Sparse symmetric upper bound]
\label{thm:sparsity-upper}
There is an absolute constant $C>0$ such that every symmetric Boolean
function satisfies
\begin{equation}
 \widetilde{\operatorname{gspar}}_\eps(f_F)
 \le 2^{CD(F)\log(2/\eps)}\log(n+1).
\label{eq:sparsity-upper}
\end{equation}
The approximant establishing \eqref{eq:sparsity-upper} simultaneously has
generalized weight at most $2^{CD(F)\log(2/\eps)}$.
\end{theorem}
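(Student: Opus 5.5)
The plan is to first build a signed exponential-sum approximant $P(w)=\sum_k a_k r_k^{w}$ of the profile $F$ on $\{0,\dots,n\}$. It should have error at most $\eps/2$ and mass $\sum_k|a_k|\le 2^{CD\log(2/\eps)}$ with $D=D(F)$, and every base $r_k$ must lie in $[0,1]$. This is exactly the weight upper bound \eqref{eq:weight-upper}, whose proof I take as the first half of the argument. It rests on two facts. First, each profile $r^{|x|}$ has generalized weight at most one, since $r^{|x|}=\prod_i (1-(1-r)x_i)=\mathbb E_{S\sim\mathrm{Bin}(1-r)} (-1)^{|S|}\cdots$; more usefully, it is the expectation over random $S$ (each coordinate included independently with probability $1-r$) of $m_{\varnothing,S}(x)$. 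Second, the profile is built in stages: a Jackson approximant in $z=e^{-w/D}$ of the edge pattern, then a majority amplifier, using that products of exponential profiles are again exponential profiles. Applied to a symmetric input, the exponential approximant therefore becomes $p(x)=\sum_k a_k\,\mathbb E_{S_k}\, m_{\varnothing,S_k}(x)$, a combination of subcube indicators with total mass $W=\sum|a_k|$. To handle both ends of the cube I allow profiles $r^{n-|x|}$ as well, realized by $m_{S,\varnothing}$. The rest of the proof sparsifies this average.

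Sparsification is empirical. I set $T=\lceil c\,W^2\log(1/\delta)\rceil$ and sample $T$ indices $k$ with probability $|a_k|/W$. For each sample I draw an independent subcube $S$ from the matching distribution, and I output $q=\frac{W}{T}\sum_{j}\operatorname{sgn}(a_{k_j})m_{\varnothing,S_j}$ (or the mirrored version). Then $q$ has at most $T$ terms, and its generalized weight is exactly $W$, which gives the simultaneous weight claim. For each fixed $x$, $q(x)$ is an average of $T$ bounded terms with mean $p(x)$. By Hoeffding, $|q(x)-p(x)|\le\eps/4$ except with probability $\delta$ once $T=O(W^2\eps^{-2}\log(1/\delta))$. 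Since $\eps^{-2}\le 2^{2\log(2/\eps)}$, this is absorbed into $2^{CD\log(2/\eps)}$.

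The main obstacle is the union bound over inputs. Taken over all $2^n$ points it would cost a factor of $n$, not $\log n$. I would handle it via monotonicity. Each term $m_{\varnothing,S}(x)=\mathbf 1[x\cap S=\varnothing]$ is antimonotone in $x$, and its mean $r^{|x|}$ decays. Fix a threshold $L=O(\log(W T/\eps)/\min_k(-\ln r_k))$. The Jackson construction gives $r_k\le e^{-1/D}$-type bases after amplification, or I truncate bases near $1$ into the constant term, so $L=O(D\log(W/\eps))$. For $|x|\ge L$ we have $p$-contributions below $\eps/8$. For $q$ I need the sample counts to also be small. Because $m_{\varnothing,S}$ is antimonotone, it suffices to control $q$'s positive and negative parts separately on the minimal points $|x|=L$, and these points I would union-bound. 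But there are $\binom nL$ of them, which gives $L\log n$ rather than $\log n$.

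The repair is to count per term rather than per point. For each sampled $S_j$, I need the event that $x\cap S_j=\varnothing$ for some $x$ of weight $L$ at which many terms survive. Instead, I bound $\sum_j m_{\varnothing,S_j}(x)$ on layer $L$ directly. A term survives on $x$ iff $S_j\subseteq \bar x$. Since $|S_j|$ concentrates near $(1-r)n$, this leaves a residual union bound over layers $|x|<L$. Using symmetry of the sampling distribution, the failure probability at $x$ depends on $x$ only through $|x|$, but the union bound still ranges over the points of those layers. I therefore choose $\log(1/\delta)=O(L\log(n+1))$ and accept this count. With $L=O(D\log(2/\eps))$, the factor $L$ is absorbed into the exponent, which yields $T\le 2^{CD\log(2/\eps)}\log(n+1)$. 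Finally $\|q-f_F\|_\infty\le\eps/2+\eps/4+\eps/8<\eps$ after adjusting constants, and a successful sample exists.
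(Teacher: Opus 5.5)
Your final argument is correct and is essentially the paper's. Both sample subcubes with inclusion probability $1-r$, apply Hoeffding with a union bound only over the at most $(n+1)^L$ inputs of weight at most $L$, and use antimonotonicity for heavier inputs; the cost $L\log(n+1)$ is absorbed because $L=O(D\log(W/\eps))=O(D^2\log(2/\eps))$ does not depend on $n$. The one real difference is organizational: you sparsify the whole signed sum at once by importance sampling, which forces you to split it into nonnegative positive and negative parts, keep the rate-one term exact, and treat the two edges separately, whereas the paper sparsifies each atom to error $\delta/(4W)$ and adds the results. Your ``repair'' paragraph is unnecessary, since there is no obstacle once $L$ is seen to be independent of $n$. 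Also, the threshold must not contain the $\log T$ of your first formula for $L$: $T$ grows with $\log n$, so that would cost an extra $\log\log n$ factor, and $We^{-L/D}\le\eps/16$ is all that is needed.
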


\begin{theorem}[Coordinate-signature lower bound]
\label{thm:signature-lower}
Let $0\le\eps<1/2$, and let $F$ be symmetric with $D=D(F)\ge2$.
Then
\begin{equation}
 \widetilde{\operatorname{gspar}}_\eps(f_F)
 \ge\log_3(n/D).
\label{eq:signature-lower}
\end{equation}
\end{theorem}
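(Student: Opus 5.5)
The plan is to prove the contrapositive: an approximant using $s$ generalized monomials forces $n\le D\,3^s$. The tool is the coordinate-signature argument sketched in the introduction.

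Fix $p=\sum_{j=1}^s a_j m_{P_j,N_j}$ with $\|p-f_F\|_\infty\le\eps<1/2$. To each coordinate $i\in[n]$ assign the signature $\sigma(i)\in\{+,-,0\}^s$, where
\[
\sigma(i)_j=
\begin{cases}
+ & \text{if } i\in P_j,\\
- & \text{if } i\in N_j,\\
0 & \text{otherwise.}
\end{cases}
\]
This is well defined because $P_j\cap N_j=\varnothing$. The signature classes partition $[n]$ into at most $3^s$ parts. It therefore suffices to show that every class $S$ has $|S|\le D$; then $n\le D3^s$, which gives \eqref{eq:signature-lower}.

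The key structural observation concerns a single class $S$ and how $p$ depends on $x_S$. Fix the coordinates outside $S$. For each $j$, the factor of $m_{P_j,N_j}$ contributed by $S$ is one of three things:
\[
\prod_{i\in S}x_i \ (\text{signature } +),\qquad \prod_{i\in S}(1-x_i)\ (\text{signature } -),\qquad 1\ (\text{signature } 0).
\]
Hence $p$ depends on $x_S$ only through whether $x_S$ is all-ones, all-zeros, or \emph{mixed}. In particular, any two inputs that agree outside $S$ and are both mixed on $S$ receive the same value of $p$.

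Now suppose for contradiction that $|S|\ge D+1$. Choose $t\in\Tset(F)$ attaining $D=\min\{t,n-t+1\}$, so $F(t-1)\ne F(t)$. I aim to build inputs $x,y$ with $|x|=t-1$ and $|y|=t$ that agree outside $S$ and are both mixed on $S$. Then $p(x)=p(y)$, while $|f_F(x)-f_F(y)|=1$. This contradicts the bound
\[
|f_F(x)-f_F(y)|\le|f_F(x)-p(x)|+|p(y)-f_F(y)|\le2\eps<1.
\]
There are two cases.

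\textbf{Case $D=t$.} Set every coordinate outside $S$ to $0$. Inside $S$, give $x$ exactly $t-1$ ones and $y$ exactly $t$ ones. Both are mixed on $S$ because $1\le t-1$ (this uses $D\ge2$) and $t=D\le|S|-1$.

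\textbf{Case $D=n-t+1$.} Set every coordinate outside $S$ to $1$. Inside $S$, give $x$ exactly $D$ zeros, so $|x|=n-D=t-1$. Give $y$ exactly $D-1$ zeros, so $|y|=t$. Both are mixed on $S$ because $1\le D-1$ and $D\le|S|-1$.

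The only delicate point is the boundary bookkeeping in these two cases. The hypothesis $D\ge2$ is exactly what keeps the lighter input away from the all-zeros (respectively all-ones) pattern on $S$. This is also consistent with the three-term exact representations possible when $D=1$. The argument uses no property of $\eps$ beyond $\eps<1/2$, so the bound holds for all $0\le\eps<1/2$, including exact sparsity.
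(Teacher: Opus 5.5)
Your proposal is correct and is essentially the paper's proof: the same signature classes, the same pair of inputs with $D-1$ and $D$ ones inside a class of size at least $D+1$ (complemented in the case $t=n-D+1$), and $D\ge2$ used to keep the lighter input off the all-zeros pattern. Your remark that $p$ depends on $x_S$ only through whether it is all-ones, all-zeros, or mixed is just a compact restatement of the paper's monomial-by-monomial case check.
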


\begin{corollary}[Logarithmic-scale sparsity classification]
\label{cor:log-sparsity}
For every fixed $0<\eps<1/2$ and every nonconstant symmetric $F$ with
$D(F)\ge2$,
\begin{equation}
 \log\!\left(
 \widetilde{\operatorname{gspar}}_\eps(f_F)+1
 \right)
 =\Theta_\eps\!\left(D(F)+\log\log(n+2)\right).
\label{eq:log-sparsity}
\end{equation}
If $D(F)=1$, then
$\operatorname{gspar}(f_F),\operatorname{gwt}(f_F)\le3$.
In particular, for every fixed integer $D_0\ge2$,
\[
 \widetilde{\operatorname{gspar}}_\eps(f_F)
 =\Theta_{\eps,D_0}(\log(n+1))
\]
whenever $D(F)=D_0$.
\end{corollary}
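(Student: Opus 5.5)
The plan is to derive the corollary by combining \cref{thm:sparsity-upper}, \cref{thm:signature-lower}, and the lower bound \eqref{eq:weight-lower} of \cref{thm:weight-classification}. Throughout, write $s=\widetilde{\operatorname{gspar}}_\eps(f_F)$ and $D=D(F)\ge2$.

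\textbf{Upper bound in \eqref{eq:log-sparsity}.} Taking logarithms in \eqref{eq:sparsity-upper} gives
\[
 \log(s+1)\le CD\log(2/\eps)+\log\log(n+1)+1 ,
\]
and this is $O_\eps(D+\log\log(n+2))$.

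\textbf{Lower bound in \eqref{eq:log-sparsity}.} I need two separate lower bounds and will then use $\max\{a,b\}\ge(a+b)/2$.

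For the $D$ term, I relate sparsity to weight by a rescaling argument. Take an $\eps$-approximant $p=\sum_{j\le s}a_jm_j$. Sparsity alone does not bound the coefficients. I would therefore run the restriction-tree transfer from the lower-bound proof directly on the number of terms. Since $\mathbb E_\rho Z_\rho(M)\le1$ for every monomial, some leaf $\rho$ (in fact most leaves) has $\sum_j 2^{\deg(m_j|_\rho)}\mathbf 1[m_j|_\rho\not\equiv0]\le O(s)$. Every restricted monomial of degree $d$ then has degree below $\log(O(s))$. The restricted approximant is an $\eps$-approximation of the centered residual symmetric function on $n-D$ free variables, whose depth is $\Omega(D)$ after centering. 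So its degree is at least the Paturi bound, which I will take to be $\Omega_\eps(D)$ at the relevant scale, giving $\log s\ge \Omega_\eps(D)-O(1)$. This is exactly the sparsity half of the exponential-profile transfer, and I am assuming the paper's lower-bound proof yields it as stated in the abstract. If that fails, I would fall back on this step as the main obstacle: one must ensure that $s$-term sparsity, rather than weight, controls the exponential moment. The union-bound form $\Pr_\rho[\deg(m_j|_\rho)\ge k,\ m_j|_\rho\not\equiv0]\le 2^{-k}$, summed over the $s$ terms, gives this directly. Hence with positive probability every surviving term has degree $<\log s+1$.

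For the $\log\log n$ term, \cref{thm:signature-lower} gives $s\ge\log_3(n/D)$. When $n\ge D^2$ this yields $\log(s+1)\ge\log\log n-O(1)$. When $n<D^2$, we have $\log\log(n+2)\le O(\log D)\le O(D)$, which is already covered by the $D$ term. Together, $\log(s+1)=\Omega_\eps(D+\log\log(n+2))$.

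\textbf{The case $D(F)=1$.} Every transition $t$ satisfies $\min\{t,n-t+1\}=1$, so $\Tset(F)\subseteq\{1,n\}$. The function $F$ is therefore constant on $\{1,\dots,n-1\}$, say with value $b$. This gives
\[
 f_F=b+(F(0)-b)\,m_{\varnothing,[n]}+(F(n)-b)\,m_{[n],\varnothing}.
\]
The constant $b$ is the empty monomial $m_{\varnothing,\varnothing}$. There are at most three terms, each with coefficient in $\{-1,0,1\}$, so both $\operatorname{gspar}$ and $\operatorname{gwt}$ are at most $3$.

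\textbf{Final claim.} For fixed $D_0\ge2$, \eqref{eq:sparsity-upper} gives $s=O_{\eps,D_0}(\log(n+1))$. For the matching lower bound, \cref{thm:signature-lower} gives $s\ge\log_3(n/D_0)=\Omega(\log n)$ once $n\ge D_0^2$. For smaller $n$, the trivial bound $s\ge1$ holds because the function is nonconstant and $\eps<1/2$, which forces a nonzero approximant. This covers the bounded range, since $\log(n+1)=O_{D_0}(1)$ there.
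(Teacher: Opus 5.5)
Your proof follows the paper's route. The upper side is the logarithm of \eqref{eq:sparsity-upper}. The lower side is the restriction-profile sparsity bound plus \cref{thm:signature-lower} with a case split; your split at $n\gtrless D^2$ does the job of the paper's split at $D\gtrless\frac12\log\log(n+2)$. The case $D=1$ uses the same three-term formula. The $D$-term step is justified by citation: the paper derives $\widetilde{\operatorname{gspar}}_\eps(f_F)=2^{\Omega_\eps(D)}$ from \cref{lem:primal-transfer} (not from \eqref{eq:weight-lower}), as you correctly suspected.

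Your self-contained sketch of that step needs a correction, though. At a leaf of the tree in \cref{lem:central-completion}, only the $R\sim\operatorname{Bin}(D,1/3)$ starred queried coordinates are free. The completion fixes all $n-D$ unqueried coordinates, and that completion is exactly what makes $\mathbb E_\rho Z_\rho(M)\le1$ hold. If $n-D$ variables really stayed free with a transition of depth $\Omega(D)$, Paturi's theorem would give $\log s=\Omega(\sqrt{nD})$, which contradicts \cref{thm:sparsity-upper}.

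So you must choose a leaf in the intersection of your union-bound event with $\{R\ge cD\}$. For example, take $k=\log s+2$, so the bad event has probability at most $1/4$, and use $\Pr[R\ge\lfloor D/3\rfloor]\ge1/2$. Then Paturi gives $\log s+2>\alpha_\eps\lfloor D/3\rfloor$. Alternatively, simply quote $s\ge\Phi_{\eps'}(f_F,\Rtree)\ge2^{\Omega_\eps(D)}$ from \cref{lem:primal-transfer} and \eqref{eq:symmetric-profile}.
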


The next statement is the general structural result underlying the weight
lower bound.
For functions on the Boolean cube, use the unnormalized pairing
$\ip{g}{h}=\sum_xg(x)h(x)$.

\begin{theorem}[Dual exponential-profile transfer]
\label{thm:dual-transfer}
Let $\Rtree$ be a two-sided restriction tree for a Boolean function $f$, let
$0\le\eps<\eps'<1/2$, and define
\[
 k(\rho)=\widetilde{\deg}_{\eps'}(f|_\rho),
 \qquad
 \Phi_{\eps'}(f,\Rtree)
 =\mathbb E_\rho\!\left[
 2^{k(\rho)}\mathbf1[k(\rho)\ge1]
 \right].
\]
There is a signed function $\Psi$ on $\{0,1\}^n$ such that
\begin{equation}
 \max_{P,N}|\ip{\Psi}{m_{P,N}}|\le1,\qquad
 \|\Psi\|_1\le\Phi_{\eps'}(f,\Rtree),\qquad
 \ip{\Psi}{f}>\eps'\Phi_{\eps'}(f,\Rtree).
\label{eq:dual-witness}
\end{equation}
In particular,
\begin{equation}
 \widetilde{\operatorname{gwt}}_\eps(f)
 \ge(\eps'-\eps)\Phi_{\eps'}(f,\Rtree).
\label{eq:dual-weight-transfer}
\end{equation}
If $\Phi_{\eps'}(f,\Rtree)>0$, there is also a signed measure $\nu$ with
\begin{equation}
 \|\nu\|_1=1,\qquad
 \ip{\nu}{f}>\eps',\qquad
 \max_{P,N}|\ip{\nu}{m_{P,N}}|
 \le\frac1{\eps'\Phi_{\eps'}(f,\Rtree)}.
\label{eq:normalized-dual}
\end{equation}
\end{theorem}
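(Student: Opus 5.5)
The plan is to build $\Psi$ by averaging lifted leaf-wise dual witnesses for approximate degree over a random leaf of $\Rtree$, each weighted by $2^{k(\rho)}$. The one-monomial exponential estimate then controls $\ip{\Psi}{m_{P,N}}$. I take a two-sided restriction tree to query coordinates and fix each to $0$ or $1$ along a branch, with $\rho$ a random leaf and $f|_\rho$ the restriction to the free coordinates. For a generalized monomial $M$, I use the one-monomial estimate $\mathbb E_\rho Z_\rho(M)\le1$, where $Z_\rho(M)=2^{\deg(M|_\rho)}\mathbf1[M|_\rho\not\equiv0]$, as stated in the introduction.

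\textbf{Step 1 (leaf witnesses).} Fix a leaf $\rho$ with $k=k(\rho)\ge1$. Then $f|_\rho$ has no approximant of degree $<k$ within $\eps'$. LP duality for best approximation gives $\psi_\rho$ on the free subcube with
\[
\|\psi_\rho\|_1=1,\qquad \ip{\psi_\rho}{f|_\rho}>\eps',\qquad \ip{\psi_\rho}{q}=0 \text{ for all } \deg q<k .
\]
The inequality is strict because the minimal error at degree $k-1$ exceeds $\eps'$. I lift $\psi_\rho$ to $\{0,1\}^n$ by zero outside the leaf's subcube and set
\[
\Psi=\mathbb E_\rho\bigl[2^{k(\rho)}\mathbf1[k(\rho)\ge1]\,\psi_\rho\bigr].
\]
The triangle inequality gives $\|\Psi\|_1\le\Phi_{\eps'}(f,\Rtree)$. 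Linearity gives $\ip{\Psi}{f}>\eps'\Phi_{\eps'}(f,\Rtree)$, because on the leaf subcube $f$ coincides with $f|_\rho$. The inequality is strict since $\Phi>0$ whenever some leaf has $k\ge1$; the degenerate case $\Phi=0$ is handled separately.

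\textbf{Step 2 (subcube correlations).} This is the main step. For $m=m_{P,N}$ we have $\ip{\psi_\rho}{m}=\ip{\psi_\rho}{m|_\rho}$, and $m|_\rho$ is either $\equiv0$ or a generalized monomial of some degree $d=\deg(m|_\rho)$ in the free variables. The term vanishes if $m|_\rho\equiv0$ or $d<k(\rho)$, by orthogonality. Otherwise $2^{k(\rho)}\le 2^{d}$, so
\[
|\ip{\Psi}{m}|\le \mathbb E_\rho\bigl[2^{\deg(m|_\rho)}\mathbf1[m|_\rho\not\equiv0]\,|\ip{\psi_\rho}{m|_\rho}|\bigr]\le\mathbb E_\rho Z_\rho(m)\le1,
\]
using $|\ip{\psi_\rho}{m|_\rho}|\le\|\psi_\rho\|_1\|m\|_\infty\le1$. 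I expect the obstacle to be normalization. The pairing is unnormalized, and the random-leaf probabilities must match the measure under which $\psi_\rho$ is normalized. I would first check that the definition of $Z_\rho$ and of the leaf distribution in the restriction-tree framework of \cite{ChattopadhyayDahiyaLovett2026} is exactly what absorbs these factors. If not, I would rescale $\psi_\rho$ by the leaf probability and re-verify both $\|\Psi\|_1\le\Phi$ and the monomial bound simultaneously.

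\textbf{Step 3 (consequences).} For \eqref{eq:dual-weight-transfer}, take $p=\sum_ja_jm_{P_j,N_j}$ with $\|p-f\|_\infty\le\eps$. Then
\[
\sum_j|a_j|\ge\ip{\Psi}{p}\ge\ip{\Psi}{f}-\eps\|\Psi\|_1>(\eps'-\eps)\Phi_{\eps'}(f,\Rtree).
\]
For \eqref{eq:normalized-dual}, assume $\Phi>0$ and set $\nu=\Psi/\|\Psi\|_1$. We have $\ip{\Psi}{f}>\eps'\,\mathbb E_\rho[2^{k(\rho)}\mathbf1[k(\rho)\ge1]\|\psi_\rho\|_1]\ge\eps'\|\Psi\|_1$, which gives $\ip{\nu}{f}>\eps'$. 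Also $\|\Psi\|_1\ge\ip{\Psi}{f}>\eps'\Phi$, using $|f|\le1$. Combined with Step 2, this gives $|\ip{\nu}{m_{P,N}}|\le1/(\eps'\Phi)$.
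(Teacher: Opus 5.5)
Your proposal is correct and is essentially the paper's proof. It uses the same LP-dual leaf witnesses, lifted to the leaf subcube and averaged with weight $2^{k(\rho)}\mathbf 1[k(\rho)\ge1]$, and the same observation that $\ip{\psi_\rho}{m_{P,N}|_\rho}$ vanishes unless $\deg(m_{P,N}|_\rho)\ge k(\rho)$, which gives $|\ip{\Psi}{m_{P,N}}|\le\mathbb E_\rho Z_\rho(m_{P,N})\le1$. It also uses the same normalization $\nu=\Psi/\|\Psi\|_1$, based on $\eps'\Phi<\ip{\Psi}{f}\le\|\Psi\|_1\le\Phi$.

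Your normalization worry is unfounded: under the unnormalized pairing the lift preserves $\|\psi_\rho\|_1=1$ and $|\ip{\psi_\rho}{m_{P,N}|_\rho}|\le1$, and leaf probabilities enter only through $\mathbb E_\rho$, exactly as in $\Phi$ and in the one-monomial estimate. Also, when $\Phi=0$ nothing can be ``handled separately'': $\|\Psi\|_1\le0$ forces $\Psi=0$, so the strict inequality in \eqref{eq:dual-witness} fails there. That is a degenerate-case slip in the statement which the paper's proof also passes over; the remaining claims are either trivial in that case or assume $\Phi>0$.
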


\begin{corollary}[Paturi and quantum-query forms]
\label{cor:query-form}
For every fixed $0<\eps<1/2$ and every nonconstant symmetric Boolean
function,
\begin{equation}
 \log\!\left(
 \widetilde{\operatorname{gwt}}_\eps(f_F)+1
 \right)
 =\Theta_\eps\!\left(
 \frac{\widetilde{\deg}_{1/3}(f_F)^2}{n}
 \right)
 =\Theta_\eps\!\left(
 \frac{Q_{1/3}(f_F)^2}{n}
 \right).
\label{eq:query-weight}
\end{equation}
If $D(F)\ge2$, then
\begin{equation}
 \log\!\left(
 \widetilde{\operatorname{gspar}}_\eps(f_F)+1
 \right)
 =\Theta_\eps\!\left(
 \frac{Q_{1/3}(f_F)^2}{n}+\log\log(n+2)
 \right).
\label{eq:query-sparsity}
\end{equation}
\end{corollary}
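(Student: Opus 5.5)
We derive the corollary from \Cref{thm:weight-classification}, \Cref{cor:log-sparsity}, Paturi's theorem, and the quantum approximate-counting characterization. First, \Cref{thm:weight-classification} gives $\widetilde{\operatorname{gwt}}_\eps(f_F)=2^{\Theta_\eps(D(F))}$ for nonconstant $F$. Since $D(F)\ge1$ in this case, it follows that $\log(\widetilde{\operatorname{gwt}}_\eps(f_F)+1)=\Theta_\eps(D(F))$. Both directions need a check. The upper bound is $CD\log(2/\eps)+1\le (C\log(2/\eps)+1)D$. For the lower bound, use $\log(x+1)\ge\max\{\log x,\,1\}$ when $x\ge1$. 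Here $x\ge1$ because $\widetilde{\operatorname{gwt}}_\eps\ge\eps'$-type bounds, or simply because an approximant of a nonconstant Boolean function with error below $1/2$ is nonzero and has some coefficient of absolute value at least $1-2\eps>0$. The bound $\max\{c_\eps D-C_\eps,\ \text{const}\}\ge c'_\eps D$ then absorbs the additive constant $C_\eps$. The same reasoning applies to \eqref{eq:log-sparsity}.

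The main step is the identity $\widetilde{\deg}_{1/3}(f_F)^2/n=\Theta(D(F))$. Paturi's theorem gives $\widetilde{\deg}_{1/3}(f_F)=\Theta(\sqrt{n\Gamma(F)})$, with $\Gamma(F)=\min\{|2t-n-1|:\ldots\}$ in his formulation. We translate this into the depth parameter. Paturi's quantity is $\Gamma_0=\min_{t\in\Tset(F)}|2t-n-1|$, up to the indexing convention $F(t-1)\ne F(t)$, and the formula reads $\Theta(\sqrt{n(n-\Gamma_0)})$. For the $t$ attaining the minimum of $|2t-n-1|$, we have $n-|2t-n-1|=2\min\{t-1,n-t\}+1$. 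This is within a factor of $2$ of $\min\{t,n-t+1\}$, because $\min\{t,n-t+1\}\ge1$. Maximizing over $t$ therefore gives $n-\Gamma_0=\Theta(D(F))$. Hence $\widetilde{\deg}_{1/3}(f_F)^2/n=\Theta(D(F))$, and this is the step where the off-by-one conventions must be handled carefully. Combining with the first paragraph yields the first equality in \eqref{eq:query-weight}.

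For the quantum form, we use $Q_{1/3}(f)\ge\widetilde{\deg}_{1/3}(f)/2$ from the polynomial method~\cite{BealsEtAl2001}. For the matching upper bound $Q_{1/3}(f_F)=O(\sqrt{nD(F)})$, we invoke the known approximate-counting algorithm~\cite{BrassardHoyerMoscaTapp2002}. The idea is that only the value of $|x|$ relative to transitions near the edges matters, up to exact values within distance $D$ of an end; away from there $F$ is constant. One therefore decides whether $|x|<D$ (or $|x|>n-D$) and computes it exactly in that case with $O(\sqrt{nD})$ queries, via exact counting of up to $D$ marked items, which is standard. Otherwise $F(|x|)$ is determined. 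Thus $Q_{1/3}(f_F)^2/n=\Theta(D(F))$, which gives the second equality in \eqref{eq:query-weight}.

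Finally, \eqref{eq:query-sparsity} follows by substituting $D(F)=\Theta(Q_{1/3}(f_F)^2/n)$ into \eqref{eq:log-sparsity}, which \Cref{cor:log-sparsity} provides under the hypothesis $D(F)\ge2$. The constants are absolute apart from the $\eps$-dependence inherited from \Cref{thm:weight-classification} and \Cref{cor:log-sparsity}.
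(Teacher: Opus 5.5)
Your proposal is correct and follows the paper's route: Paturi's theorem (your translation $n-\Gamma_0=2D(F)-1$ is exactly right) and the polynomial method plus quantum counting give $\widetilde{\deg}_{1/3}(f_F)^2/n=\Theta(Q_{1/3}(f_F)^2/n)=\Theta(D(F))$. This is then substituted into \cref{thm:weight-classification} and \cref{cor:log-sparsity}, as in the paper. One small slip: $\widetilde{\operatorname{gwt}}_\eps(f_F)\ge1$ is false in general (for AND, $(1-\eps)\prod_i x_i$ has weight $1-\eps$), and there need not be a single coefficient of size $1-2\eps$; however, every monomial is at most $1$ and the approximant exceeds $1-\eps$ somewhere, so the total mass is at least $1-\eps$, giving $\log(\widetilde{\operatorname{gwt}}_\eps(f_F)+1)\ge\log(3/2)$, which is all your absorption of $C_\eps$ needs.
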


\section{Exponential restriction-profile transfer}\label{sec:transfer}

We use the two-sided restriction trees of Chattopadhyay, Dahiya, and
Lovett~\cite{ChattopadhyayDahiyaLovett2026}.
A two-sided restriction tree may query a variable at most once on a
root-to-leaf path, and every internal node has outgoing labels $0,1,*$.
At a leaf, every variable not queried on the path is assigned a value in
$\{0,1\}$; this completion may depend on the leaf.
A random leaf chooses each outgoing branch with probability $1/3$.

For a generalized monomial $M$, define
\[
 Z_\rho(M)=
 \begin{cases}
 2^{\deg(M|_\rho)},&M|_\rho\not\equiv0,\\
 0,&M|_\rho\equiv0.
 \end{cases}
\]

\begin{lemma}[Exponential one-monomial estimate]
\label{lem:one-monomial}
For every two-sided restriction tree and every generalized monomial,
\[
 \mathbb E_\rho Z_\rho(M)\le1.
\]
\end{lemma}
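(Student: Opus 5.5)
We argue by induction on the tree, from the leaves upward, proving the stronger statement that for every node $v$ and every generalized monomial $M$ whose variables are still unqueried at $v$, the expectation of $Z_\rho(M)$ over a random leaf $\rho$ of the subtree at $v$ is at most $2^{0}\cdot$(the contribution normalized as $2^{\deg M}$ relative), namely
\[
 \mathbb E_{\rho\in\Rtree_v} Z_\rho(M)\le 1
\]
for every monomial $M$ in the residual variables. Here $M$ is the restriction of the original monomial along the path to $v$, and we use the convention $Z\equiv0$ once $M$ has been killed. At the root this is the lemma.

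\textbf{Leaves.} At a leaf every unqueried variable is fixed by the completion. Hence $M|_\rho$ is either identically $0$ or identically $1$, and in particular has degree $0$. So $Z_\rho(M)\in\{0,1\}$ and the bound holds trivially. Here I read $\deg(M|_\rho)$ as the number of literals of $M$ whose variables are left free. Since the completion assigns every unqueried variable, the degree at a leaf is always $0$ and the statement is immediate.

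\textbf{The substantive reading and the inductive step.} The potential is meaningful only when the $*$-branches leave variables free in the leaf's residual function, and the completion fixes only non-queried variables. So the intended statement counts $\deg(M|_\rho)$ as the number of literals of $M$ on variables set to $*$ along the path, with the remaining literals evaluated at their fixed values. With this reading we induct. Suppose the internal node $v$ queries $x_i$.

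If $x_i$ does not appear in $M$, all three branches pass $M$ unchanged, and averaging the inductive bounds gives at most $1$.

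If $x_i$ appears in $M$ as a literal $\ell$, the three branches behave as follows:
\begin{itemize}
 \item the branch falsifying $\ell$ kills $M$ and contributes $0$;
 \item the branch satisfying $\ell$ removes $\ell$, leaving a monomial $M'$ whose expected potential below is at most $1$ by induction;
 \item the $*$ branch keeps $\ell$ free, so the potential factors as $2\cdot Z(M')$, with expected value at most $2$ by induction applied to $M'$.
\end{itemize}
The average is therefore at most $(0+1+2)/3=1$. Formally, I will define $Z$ along a path as $2^{\#\{\text{literals of }M\text{ on }*\text{-queried variables}\}}$ times the indicator that no literal is falsified by a $0/1$ query or by the completion. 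The induction hypothesis is then: for every $v$ and every generalized monomial $M$, the expectation over leaves below $v$ of this quantity, computed with the path starting at $v$, is at most $1$.

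\textbf{Main obstacle.} The only delicate point is bookkeeping: showing that the $*$ factor of $2$ multiplies cleanly. This works because a variable is queried at most once per path, so the free literal $\ell$ is never touched again, and its factor $2$ separates from the rest of the monomial. The leaf completion, whatever it depends on, only further kills or keeps literals of unqueried variables, which can only lower the potential.
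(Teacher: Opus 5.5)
Your argument is correct and is the same induction on the tree as the paper's proof: a query on a variable absent from $M$ averages three inductive bounds; a query on a literal of $M$ contributes $0$, $1$ and $2$ copies of the potential of the remaining monomial $M'$; and the base case has potential at most one. Your opening ``Leaves'' paragraph misreads the setup, since $*$-queried variables stay free and $M|_\rho$ need not be constant, so drop it in favor of your corrected formulation, which matches the paper's.
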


\begin{proof}
Induct on the tree depth.
If the queried variable does not occur in $M$, average the three inductive
bounds.
If $M=x_iM'$, the branches $0,1,*$ contribute at most $0,1,2$ copies of the
potential of $M'$, so their average is one copy.
The case $M=(1-x_i)M'$ is identical.
At depth zero all variables are assigned and the potential is at most one.
\end{proof}

We first record the support part of the transfer.

\begin{lemma}[Primal profile transfer]
\label{lem:primal-transfer}
In the setting of \cref{thm:dual-transfer},
\[
 \widetilde{\operatorname{gspar}}_\eps(f)
 \ge\Phi_{\eps'}(f,\Rtree).
\]
\end{lemma}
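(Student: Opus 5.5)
Let $p$ be an optimal $\eps$-approximant of $f$ with $p=\sum_{j=1}^s a_jm_{P_j,N_j}$ and $s=\widetilde{\operatorname{gspar}}_\eps(f)$. The plan is to restrict $p$ at each leaf of $\Rtree$, bound the leafwise complexity from below by $2^{k(\rho)}$, and then average using \cref{lem:one-monomial}.

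\textbf{Restriction at a leaf.} Fix a leaf $\rho$. The restriction $p|_\rho$ is a function of the $*$-variables on the path. Each $m_{P_j,N_j}|_\rho$ is either identically zero or again a generalized monomial, of degree $\deg(m_{P_j,N_j}|_\rho)$, in those free variables. Since $\|p|_\rho-f|_\rho\|_\infty\le\eps<\eps'$, the restricted polynomial is an $\eps'$-approximant of $f|_\rho$. Hence
\[
 \max\{\deg(m_{P_j,N_j}|_\rho): m_{P_j,N_j}|_\rho\not\equiv0\}\ \ge\ k(\rho),
\]
because the degree of $p|_\rho$ is at most this maximum. When $k(\rho)\ge1$, at least one surviving monomial therefore has degree at least $k(\rho)$. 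This gives the pointwise inequality
\[
 2^{k(\rho)}\mathbf1[k(\rho)\ge1]\ \le\ \max_j Z_\rho(m_{P_j,N_j})\ \le\ \sum_{j=1}^s Z_\rho(m_{P_j,N_j}).
\]
The indicator matters here. When $k(\rho)=0$ the left side is $0$, so nothing is needed. This covers the case where $f|_\rho$ is nearly constant and $p|_\rho$ might consist only of vanishing monomials, or might be empty.

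\textbf{Averaging.} Take $\mathbb E_\rho$ of the displayed inequality and apply \cref{lem:one-monomial} to each of the $s$ monomials. This yields $\Phi_{\eps'}(f,\Rtree)\le s$, which is the claim.

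The only delicate point is the leaf step. One must confirm that restriction, including the leaf-dependent completion of the unqueried variables, sends a generalized monomial to either zero or a generalized monomial in the free variables. It does: each literal becomes a constant or stays a literal. One must also check that the degree of a sum is at most the maximum degree of its summands, so the approximate-degree lower bound at $\rho$ forces a high-degree surviving term. No strictness of $\eps<\eps'$ is really needed here; $\eps\le\eps'$ suffices for the approximation inclusion.
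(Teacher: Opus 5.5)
Your proposal is correct and follows the paper's proof essentially step for step. You restrict a minimum-support $\eps$-approximant at each leaf, use $k(\rho)\ge1$ to force a surviving monomial of restricted degree at least $k(\rho)$, so that $2^{k(\rho)}\le\sum_j Z_\rho(M_j)$, and then average using \cref{lem:one-monomial}. Your extra remarks are accurate refinements of the same argument: leaves with $k(\rho)=0$ or with no surviving monomials contribute nothing, and $\eps\le\eps'$ already suffices for this lemma.
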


\begin{proof}
Let $p=\sum_{j=1}^s\lambda_jM_j$ be an $\eps$-approximant.
At a leaf with $k=k(\rho)\ge1$, at least one surviving monomial has degree at
least $k$; otherwise $p|_\rho$ would be an $\eps'$-approximant of degree
below $k$.
Therefore
\[
 2^k\le\sum_{j=1}^s Z_\rho(M_j).
\]
Taking expectations and applying \cref{lem:one-monomial} gives
$\Phi_{\eps'}(f,\Rtree)\le s$.
\end{proof}

\begin{proof}[Proof of \cref{thm:dual-transfer}]
For every leaf with $k(\rho)\ge1$, linear-programming duality for
approximate degree gives a function $\phi_\rho$ on the residual cube such
that
\begin{equation}
 \|\phi_\rho\|_1=1,\qquad
 \ip{\phi_\rho}{f|_\rho}>\eps',\qquad
 \ip{\phi_\rho}{q}=0
 \quad\text{for every }\deg q<k(\rho).
\label{eq:leaf-dual}
\end{equation}
Let $L_\rho\phi_\rho$ denote its lift to the original cube, supported on the
subcube selected by the completed leaf restriction.
Define
\begin{equation}
 \Psi=
 \mathbb E_\rho\!\left[
 2^{k(\rho)}\mathbf1[k(\rho)\ge1]L_\rho\phi_\rho
 \right].
\label{eq:global-dual}
\end{equation}

For a generalized monomial $M$, the inner product
$\ip{\phi_\rho}{M|_\rho}$ vanishes whenever the restriction is zero or has
degree below $k(\rho)$.
Otherwise its absolute value is at most one.
Consequently,
\[
 |\ip{\Psi}{M}|
 \le
 \mathbb E_\rho\!\left[
 2^{k(\rho)}
 \mathbf1[\deg(M|_\rho)\ge k(\rho)]
 \right]
 \le\mathbb E_\rho Z_\rho(M)
 \le1.
\]
The triangle inequality and \eqref{eq:leaf-dual} give
\[
 \|\Psi\|_1\le\Phi_{\eps'}(f,\Rtree),
 \qquad
 \ip{\Psi}{f}>\eps'\Phi_{\eps'}(f,\Rtree),
\]
proving \eqref{eq:dual-witness}.

For any $\eps$-approximant $p=\sum_j\lambda_jM_j$,
\[
 \sum_j|\lambda_j|
 \ge\ip{\Psi}{p}
 \ge\ip{\Psi}{f}-\eps\|\Psi\|_1
 >(\eps'-\eps)\Phi_{\eps'}(f,\Rtree),
\]
which proves \eqref{eq:dual-weight-transfer}.
Finally, set $\nu=\Psi/\|\Psi\|_1$.
Since
\[
 \eps'\Phi_{\eps'}(f,\Rtree)
 <\ip{\Psi}{f}
 \le\|\Psi\|_1
 \le\Phi_{\eps'}(f,\Rtree),
\]
normalization gives \eqref{eq:normalized-dual}.
\end{proof}

The lifting formula \eqref{eq:global-dual} is explicit whenever the leaf
witnesses are explicit.
At a centered symmetric leaf one may replace $k(\rho)$ by any certified
lower degree $d(\rho)\le k(\rho)$ and use the explicit symmetric dual
witnesses of Bun and Thaler~\cite{BunThaler2015}.

\section{Centering a deepest transition}\label{sec:center}

We now build the restriction profile used for symmetric functions.

\begin{lemma}[Central completion]
\label{lem:central-completion}
Let $t\in\Tset(F)$ attain $D=D(F)$.
There is a two-sided restriction tree querying $D$ variables such that, if
$R$ variables remain free at a random leaf, then
$R\sim\operatorname{Bin}(D,1/3)$ and, for every leaf with $R=r\ge1$,
the residual symmetric function has a transition between weights
$\lceil r/2\rceil-1$ and $\lceil r/2\rceil$.
\end{lemma}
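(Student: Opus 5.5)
Fix $t\in\Tset(F)$ attaining $D=\min\{t,n-t+1\}$. The plan is to take the simplest possible tree, a path: query the fixed variables $x_1,\dots,x_D$ in order, each with outgoing labels $0,1,*$. Every leaf then queries exactly $D$ variables, and since each branch has probability $1/3$ independently, the number $R$ of $*$-labels on the path satisfies $R\sim\operatorname{Bin}(D,1/3)$. The structure of the tree carries no information; all the work goes into choosing the leaf-dependent completion of the $n-D$ unqueried variables.

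At a leaf, let $a$ be the number of queried variables set to $1$, so that $a+R\le D$. The remaining $n-D$ variables are completed with $b$ ones, where $b$ depends on the leaf. The residual function of the $R$ free variables is $G(j)=F(a+b+j)$ for $0\le j\le R$. The target is to choose $b$ with
\[
 a+b+\lceil R/2\rceil=t,\qquad\text{that is,}\qquad b=t-a-\lceil R/2\rceil .
\]
Then $G(\lceil R/2\rceil-1)=F(t-1)\neq F(t)=G(\lceil R/2\rceil)$, which is exactly the required transition for $R=r\ge1$. For leaves with $R=0$ any completion works; I would take $b=\max\{0,\min\{n-D,\,t-a\}\}$.

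The main step is checking that $0\le b\le n-D$ whenever $R\ge1$, and I would split this according to which of $t$ and $n-t+1$ is smaller.
\begin{itemize}
\item For $b\ge0$: since $a+\lceil R/2\rceil\le a+R\le D\le t$, we get $b\ge t-D\ge0$.
\item For $b\le n-D$: this is equivalent to $t-n+D\le a+\lceil R/2\rceil$. Because $D\le n-t+1$, the left side is at most $1$, and the right side is at least $\lceil R/2\rceil\ge1$ when $R\ge1$.
\end{itemize}
So the inequalities are tight only at the boundary. This is the one delicate point, and it is why the minimum in $D(F)$ uses $n-t+1$ rather than $n-t$. The centering at $\lceil r/2\rceil$ is precisely what absorbs the off-by-one. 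If some index convention turns out to break by one, I would adjust by centering at $\lfloor r/2\rfloor$ or by swapping the roles of $0$ and $1$ via the symmetry $t\mapsto n-t+1$.

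Finally, I would note the facts needed later. The residual function is symmetric in the $R$ free variables, since it depends only on their Hamming weight. Its transition sits at depth $\min\{\lceil r/2\rceil,\,r-\lceil r/2\rceil+1\}\ge r/2$, so Paturi's theorem gives approximate degree $\Omega(r)$, which is how the lemma will feed into \cref{thm:dual-transfer}.
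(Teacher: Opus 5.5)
Your proposal is correct and is essentially the paper's proof: query a fixed set of $D$ coordinates, complete the outside coordinates with exactly $t-a-\lceil r/2\rceil$ ones, and verify feasibility from $D\le t$ and $D\le n-t+1$, exactly as the paper does. Specifying a completion at leaves with $R=0$ fills a detail the paper leaves implicit, and your contingency about shifting the centering by one is unnecessary, since the indices already check out.
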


\begin{proof}
Fix a set $T$ of $D$ coordinates and query precisely those coordinates.
At a leaf, let $a,b,r$ be the numbers of queried coordinates fixed
respectively to one, fixed to zero, and left free.
Then $a+b+r=D$, and $R=r\sim\operatorname{Bin}(D,1/3)$.

For $r\ge1$, put $q=\lceil r/2\rceil$ and $c=t-a-q$.
Complete the $n-D$ coordinates outside $T$ by assigning exactly $c$ of them
to one.
This is feasible because
\[
 c\ge t-(D-r)-q=(t-D)+(r-q)\ge0,
\]
and
\[
 n-D-c=n-t-r-b+q
 \ge D-1-r-b+q=a+q-1\ge0.
\]
The residual profile is
\[
 F(a+c+|z|)=F(t-q+|z|),
\]
so the selected transition lies between residual weights $q-1$ and $q$.
\end{proof}

\begin{proof}[Proof of the lower bound in
\cref{thm:weight-classification}]
Fix $\eps<\eps'<1/2$.
At a leaf of \cref{lem:central-completion} with $R=r\ge1$, Paturi's
theorem gives
\[
 \widetilde{\deg}_{\eps'}(f_F|_\rho)
 \ge\alpha_{\eps'}r
\]
for a constant $\alpha_{\eps'}>0$.
Hence
\begin{align}
 \Phi_{\eps'}(f_F,\Rtree)
 &\ge
 \mathbb E\!\left[
 2^{\alpha_{\eps'}R}\mathbf1[R\ge1]
 \right]\notag\\
 &=
 \left(\frac{2+2^{\alpha_{\eps'}}}{3}\right)^D
 -\left(\frac23\right)^D.
\label{eq:symmetric-profile}
\end{align}
\Cref{thm:dual-transfer} now gives
$\widetilde{\operatorname{gwt}}_\eps(f_F)
=2^{\Omega_\eps(D)}$.
\Cref{lem:primal-transfer} gives the same exponential lower bound for
generalized sparsity.
\end{proof}

\subsection{Subcube discrepancy and conjunction edge}
\label{subsec:discrepancy}

Normalize the witness from \cref{thm:dual-transfer}, put
$\mu(x)=|\nu(x)|$, and let $y(x)=\operatorname{sgn}\nu(x)$ on the support of
$\mu$.
Then $\mu$ is a probability distribution and
\begin{equation}
 \left|
 \mathbb E_{x\sim\mu}[y(x)m_{P,N}(x)]
 \right|
 \le\frac1{\eps'\Phi_{\eps'}(f,\Rtree)}
\label{eq:atom-edge}
\end{equation}
for every conjunction of literals.
Since the constant function is itself a generalized monomial,
\begin{equation}
 \mathbb E_{x\sim\mu}[y(x)(2f(x)-1)]
 \ge
 2\eps'-\frac1{\eps'\Phi_{\eps'}(f,\Rtree)}.
\label{eq:target-edge}
\end{equation}
Writing a $\{-1,1\}$-valued conjunction hypothesis as
$h_{P,N}=2m_{P,N}-1$, we also have
\begin{equation}
 \left|
 \mathbb E_{x\sim\mu}[y(x)h_{P,N}(x)]
 \right|
 \le\frac3{\eps'\Phi_{\eps'}(f,\Rtree)}.
\label{eq:conjunction-edge}
\end{equation}

For the deepest-transition tree,
\eqref{eq:symmetric-profile} makes the right side of
\eqref{eq:conjunction-edge} equal to
$2^{-\Omega_{\eps'}(D(F))}$, whereas
\eqref{eq:target-edge} is a positive constant.
Thus there is an agnostic labeled distribution on which the symmetric
predicate has constant edge but every individual conjunction has
exponentially small edge.
This is a weak-learning obstruction, not a computational lower bound for
arbitrary learning algorithms.

\section{Symmetric atoms and exponential profiles}\label{sec:atoms}

For $a,b\ge0$ with $a+b\le n$, define
\[
 B_{a,b}(w)=
 \frac{\binom wa\binom{n-w}b}
 {\binom na\binom{n-a}b}.
\]
This is the uniform average of all generalized monomials with $a$ positive
and $b$ negative literals.

\begin{lemma}[Symmetric atomic reduction]
\label{lem:symmetric-reduction}
For every symmetric real function $g(x)=G(|x|)$,
\begin{multline}
 \widetilde{\operatorname{gwt}}_\eps(g)
 =\min\Biggl\{
 \sum_{a+b\le n}|c_{a,b}|:\ 
 \left|
 G(w)-\sum_{a+b\le n}c_{a,b}B_{a,b}(w)
 \right|\le\eps\\
 \text{for every }w\in\{0,\ldots,n\}
 \Biggr\}.
\label{eq:atomic-reduction}
\end{multline}
\end{lemma}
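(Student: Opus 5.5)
The plan is a symmetrization argument that proves the two inequalities in \eqref{eq:atomic-reduction} separately.

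\textbf{Right side is at least the left side.} Given coefficients $c_{a,b}$ satisfying the pointwise constraints, expand each atom. By definition, $B_{a,b}(|x|)$ is the uniform average of the $\binom na\binom{n-a}b$ generalized monomials $m_{P,N}$ with $|P|=a$, $|N|=b$. To see this, note that for fixed $x$ of weight $w$, the number of pairs $(P,N)$ with $P\subseteq\{i:x_i=1\}$ and $N\subseteq\{j:x_j=0\}$ is $\binom wa\binom{n-w}b$. Hence
\[
 p(x)=\sum_{a,b}c_{a,b}B_{a,b}(|x|)
\]
is a signed combination of subcube indicators with total coefficient mass at most $\sum|c_{a,b}|$. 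It satisfies $\|p-g\|_\infty\le\eps$, because the constraint is imposed at every weight $w$. This gives $\widetilde{\operatorname{gwt}}_\eps(g)\le$ the right side.

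\textbf{Left side is at least the right side.} Take an $\eps$-approximant $p=\sum_j a_jm_{P_j,N_j}$ of minimal weight. The minimum is attained, since the feasible set is a closed polyhedron and the weight is a norm on a finite-dimensional coefficient space; alternatively, work with near-optimal approximants and take an infimum. Average $p$ over the symmetric group $S_n$ acting on coordinates:
\[
 \bar p(x)=\frac1{n!}\sum_{\pi\in S_n}p(\pi x).
\]
Because $g$ is symmetric, $|\bar p(x)-g(x)|\le\frac1{n!}\sum_\pi|p(\pi x)-g(\pi x)|\le\eps$. Each averaged monomial $\frac1{n!}\sum_\pi m_{P,N}(\pi x)$ equals $B_{|P|,|N|}(|x|)$, since the orbit of $(P,N)$ under $S_n$ is uniform over pairs of the given sizes. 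Therefore
\[
 \bar p(x)=\sum_{a,b}c_{a,b}B_{a,b}(|x|),\qquad c_{a,b}=\sum_{j:|P_j|=a,|N_j|=b}a_j,
\]
and $\sum|c_{a,b}|\le\sum_j|a_j|$ by the triangle inequality. Thus $(c_{a,b})$ is feasible on the right side, and the right side is at most $\widetilde{\operatorname{gwt}}_\eps(g)$.

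Both steps are routine, and there is no real obstacle. The only points needing care are the orbit-uniformity claim that identifies the averaged monomial with $B_{a,b}$, and the convention that the constant function is $m_{\varnothing,\varnothing}=B_{0,0}\equiv1$.
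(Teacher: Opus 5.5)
Your proof is correct and follows the same approach as the paper. You symmetrize an optimal approximant over $S_n$, so each monomial averages to $B_{a,b}$ and grouping coefficients does not increase mass; conversely, you lift each $B_{a,b}$ as a unit-mass uniform average of subcube indicators, and your count $\binom wa\binom{n-w}b$ and remark that the minimum is attained supply details the paper leaves implicit.
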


\begin{proof}
Average an arbitrary approximating representation over all coordinate
permutations.
The orbit average of a monomial with $a$ positive and $b$ negative literals
is $B_{a,b}$, and grouping coefficients cannot increase their total absolute
mass.
Conversely, each $B_{a,b}$ is an average of generalized monomials with total
coefficient mass one, so every feasible univariate combination lifts with
the same mass.
\end{proof}

\begin{lemma}[Unit-weight exponential profiles]
\label{lem:exponential-atom}
For every $r\in(0,1]$, the profiles $r^w$ and $r^{n-w}$ have generalized
weight at most one.
\end{lemma}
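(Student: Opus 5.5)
Our plan is to write $r^{|x|}$ as the expectation of a random subcube indicator, so that it becomes a convex combination of generalized monomials and therefore has generalized weight at most one. Let $p=1-r\in[0,1)$ and pick a random set $N\subseteq[n]$ by including each coordinate independently with probability $p$. Then
\[
 \mathbb E_N\, m_{\varnothing,N}(x)
 =\mathbb E_N\prod_{j\in N}(1-x_j)
 =\prod_{i=1}^n\bigl(1-p\,x_i\bigr)
 =\prod_{i:x_i=1}(1-p)
 =r^{|x|}.
\]
The second equality uses independence of the coordinates. The third holds because a coordinate with $x_i=0$ contributes the factor $1$, and one with $x_i=1$ contributes $1-p$.

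This identity writes $r^{|x|}=\sum_{N}\Pr[N]\,m_{\varnothing,N}(x)$ exactly. The coefficients are nonnegative and sum to one, so by the definition of $\operatorname{gwt}$ we get $\operatorname{gwt}(r^{|x|})\le1$. When $r=1$ the distribution puts all its mass on $N=\varnothing$, which gives the constant monomial, so this case needs no separate treatment.

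For $r^{n-|x|}$ we apply the symmetric argument with positive literals: include each coordinate in $P$ independently with probability $1-r$. Then $\mathbb E_P\prod_{i\in P}x_i=\prod_i(1-(1-r)(1-x_i))=r^{\#\{i:x_i=0\}}=r^{n-|x|}$. Alternatively, the substitution $x\mapsto\mathbf 1-x$ exchanges positive and negative literals and preserves generalized weight.

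If one prefers to phrase the result through \cref{lem:symmetric-reduction}, the same identity expands as $r^w=\sum_{b}\binom nb p^b(1-p)^{n-b}B_{0,b}(w)$. This holds because the orbit average of the $N$ of size $b$ is exactly $B_{0,b}$, so the coefficients are a binomial distribution of total mass one.

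There is no real obstacle here. The only point requiring care is the choice of a product distribution, so that the expectation factorizes coordinatewise, together with checking the degenerate endpoint $r=1$.
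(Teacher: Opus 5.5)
Your proof is correct and is essentially the paper's argument. The paper writes the same binomial mixture directly in symmetrized form, $\sum_b\binom nb(1-r)^br^{n-b}B_{0,b}(w)=r^w$ (your final paragraph), and complements the inputs for $r^{n-w}$; your product-distribution computation is the unsymmetrized version of that identity.
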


\begin{proof}
For the first profile,
\begin{equation}
 \sum_{b=0}^n
 \binom nb(1-r)^br^{n-b}B_{0,b}(w)=r^w.
\label{eq:exponential-identity}
\end{equation}
The coefficients are nonnegative and sum to one.
Complementing every input bit proves the assertion for $r^{n-w}$.
\end{proof}

\begin{lemma}[Bounded polynomial coefficient mass]
\label{lem:coefficient-mass}
If $p(z)=\sum_{j=0}^m\alpha_jz^j$ has degree at most $m$ and
$\|p\|_{L_\infty([0,1])}\le B$, then
\begin{equation}
 \sum_{j=0}^m|\alpha_j|\le4B\,7^m.
\label{eq:coefficient-mass}
\end{equation}
\end{lemma}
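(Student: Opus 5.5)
We bound the coefficients through interpolation at well-chosen nodes. Pick the $m+1$ equally spaced nodes $z_i=i/m$ for $i=0,\ldots,m$ (for $m=0$ the claim is trivial, since $|\alpha_0|\le B$). Lagrange interpolation represents
\[
 p(z)=\sum_{i=0}^m p(z_i)\ell_i(z),\qquad
 \ell_i(z)=\prod_{k\ne i}\frac{z-z_k}{z_i-z_k}.
\]
Each $|p(z_i)|\le B$. Hence $\sum_j|\alpha_j|\le B\sum_i\|\ell_i\|_{\mathrm{coef}}$, where $\|\cdot\|_{\mathrm{coef}}$ denotes the sum of the absolute values of the monomial coefficients. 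This seminorm is submultiplicative under products. The whole task therefore reduces to bounding the coefficient mass of the Lagrange basis polynomials.

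For the numerator, $\prod_{k\ne i}(z-k/m)$ has coefficient mass at most $\prod_{k\ne i}(1+k/m)$. This is bounded by $\prod_{k=1}^m(1+k/m)=\prod_{k=1}^m\frac{m+k}{m}=\frac{(2m)!}{m!\,m^m}$. The denominator has absolute value $\prod_{k\ne i}|i-k|/m=i!\,(m-i)!/m^m$. The factors $m^m$ cancel, and we get
\[
 \|\ell_i\|_{\mathrm{coef}}\le\frac{(2m)!}{m!\,i!\,(m-i)!}=\binom{2m}{m}\binom mi .
\]
Summing over $i$ gives $\sum_i\|\ell_i\|_{\mathrm{coef}}\le\binom{2m}{m}2^m\le 4^m\cdot2^m=8^m$.

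That yields $8^m$, slightly weaker than the claimed $4\cdot7^m$, so the constant needs sharpening. The main obstacle is exactly this constant. The numerator bound is wasteful, since it uses the full product up to $k=m$ for every $i$. A sharper estimate is $\prod_{k\ne i}(1+k/m)=\frac{(2m)!}{m!\,m^m}\cdot\frac{m}{m+i}$, but this saves only polynomial factors.

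A cleaner route is a Chebyshev argument on $[0,1]$. Substitute $z=(1+x)/2$ and expand $p$ in shifted Chebyshev polynomials $T_j^*(z)=T_j(2z-1)$. The Chebyshev coefficients satisfy $|c_j|\le 2B$ by orthogonality. The monomial coefficient mass of $T_j^*$ is at most $|T_j|(3)$, obtained by evaluating $T_j(2z-1)$ with all signs made positive, i.e. at $2z+1$ with $z=1$. This equals $T_j(3)=\tfrac12\bigl((3+2\sqrt2)^j+(3-2\sqrt2)^j\bigr)\le(3+2\sqrt2)^j\le 5.83^j$.

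Summing over $j$ gives
\[
 \sum_j|\alpha_j|\le 2B\sum_{j=0}^m5.83^j\le 2B\cdot\tfrac{5.83}{4.83}\,5.83^m\le 4B\,7^m,
\]
which is the claimed bound with room to spare. The one check that needs care is the positivity step: the coefficient mass of $T_j(2z-1)$ in powers of $z$ is at most $T_j(2z+1)$ at $z=1$. This holds because $T_j(y)$ has alternating-sign monomial coefficients in $y$, so $|T_j|(|y|)$ majorizes term by term. Expanding $(2z-1)^k$ then contributes mass at most $3^k$, and the total is $\sum_k|t_{j,k}|3^k=|T_j(-3)|=T_j(3)$. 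I would present this Chebyshev version as the proof.
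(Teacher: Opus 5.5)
Your final argument is the paper's route: expand in $T_j(2z-1)$, bound the Chebyshev coefficients by $2B$, bound the monomial mass of each $T_j(2z-1)$ geometrically, and sum. However, the step you flag as needing care rests on a false identity. Write $T_j(y)=\sum_k t_{j,k}y^k$. Only powers of the same parity as $j$ occur, and the coefficient of $y^{j-2l}$ has sign $(-1)^l$. Substituting $y=-3$ multiplies every term by the same sign $(-1)^j$, so $|T_j(-3)|=T_j(3)$ is still the alternating sum, not $\sum_k|t_{j,k}|3^k$. Already for $j=2$, $T_2(y)=2y^2-1$ gives $\sum_k|t_{2,k}|3^k=19$, while $T_2(3)=17$. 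Making all signs positive requires an imaginary point: $\sum_k|t_{j,k}|y^k=|T_j(iy)|$ for $y>0$. So your chain actually yields only $\tfrac12\bigl|(3+\sqrt{10})^j+(3-\sqrt{10})^j\bigr|\le 6.17^j$, not $5.83^j$.

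The lemma still survives. With base $6.17$ you get $2B\sum_{j\le m}6.17^j\le 2B\cdot\tfrac{6.17}{5.17}\,6.17^m\le 4B\,7^m$. Your intermediate claim that the mass of $T_j(2z-1)$ is at most $T_j(3)$ is in fact true, with equality, but for a different reason. The roots of $T_j(2z-1)$ are $r_k=\bigl(1+\cos\tfrac{(2k-1)\pi}{2j}\bigr)/2\in(0,1)$. Hence $T_j(2z-1)=2^{2j-1}\prod_k(z-r_k)$ with all $r_k>0$ has coefficients alternating in sign in $z$. Its mass is therefore $2^{2j-1}\prod_k(1+r_k)=|T_j(-3)|=T_j(3)$; for example, $T_2(2z-1)=8z^2-8z+1$ has mass $17$. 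The paper avoids closed forms. From $T_{j+1}(2z-1)=2(2z-1)T_j(2z-1)-T_{j-1}(2z-1)$, the masses satisfy $M_{j+1}\le 6M_j+M_{j-1}$, hence $M_j\le 7^j$ by induction. Replace your positivity step with either the root argument, which gives the sharper base $3+2\sqrt2$, or the recurrence, and the proof is complete.
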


\begin{proof}
Expand $p$ in the shifted Chebyshev basis
$T_j(2z-1)$.
The zeroth Chebyshev coefficient has magnitude at most $B$, and every other
coefficient has magnitude at most $2B$.
The monomial coefficient $\ell_1$-norm of $T_j(2z-1)$ is at most $7^j$:
this follows inductively from the Chebyshev recurrence and the fact that
$2(2z-1)$ has coefficient mass six.
Summing the resulting geometric series gives \eqref{eq:coefficient-mass}.
\end{proof}

\section{Edge approximation and error amplification}
\label{sec:amplification}

The next lemma approximates the entire edge pattern at once.

\begin{lemma}[Constant-error exponential edge profile]
\label{lem:constant-edge}
There are absolute constants $C>0$ and $0<\eta_0<1/100$ with the following
property.
Let $D\ge1$, and let $u:\{0,1,\ldots,D\}\to[-1,1]$ satisfy $u(D)=0$.
There are $m\le CD$, coefficients $\alpha_0,\ldots,\alpha_m$, and rates
$r_j=e^{-j/D}$ such that
\[
 q(w)=\sum_{j=0}^m\alpha_jr_j^w
\]
satisfies
\begin{equation}
 |q(w)-u(w)|\le\eta_0\quad(0\le w<D),
 \qquad
 |q(w)|\le\eta_0\quad(w\ge D),
\label{eq:constant-edge-error}
\end{equation}
and
\begin{equation}
 \sum_{j=0}^m|\alpha_j|\le2^{CD}.
\label{eq:constant-edge-mass}
\end{equation}
\end{lemma}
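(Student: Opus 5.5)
The plan is to change variables to $z=e^{-w/D}\in(0,1]$, so that $r_j^w=z^j$ and $q(w)=p(z_w)$ with $p(z)=\sum_j\alpha_jz^j$ and $z_w=e^{-w/D}$. The task then becomes finding a polynomial $p$ of degree $m\le CD$ with three properties: it is close to $u(w)$ at the nodes $z_w$ for $0\le w<D$; it is small on $(0,e^{-1}]$, which contains every $z_w$ with $w\ge D$; and it is bounded by an absolute constant on all of $[0,1]$. Once $p$ is bounded on $[0,1]$, \cref{lem:coefficient-mass} converts the bound $\|p\|_{L_\infty([0,1])}\le B$ into coefficient mass at most $4B\,7^m=2^{O(D)}$, which is \eqref{eq:constant-edge-mass}.

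\textbf{Target function.} I would define a continuous $g:[0,1]\to[-1,1]$ as follows.
\begin{itemize}
\item Set $g\equiv0$ on $[0,e^{-1}]$. This is consistent with $u(D)=0$ at the node $z_D=e^{-1}$.
\item Set $g(z_w)=u(w)$ at the nodes $z_w=e^{-w/D}$ for $0\le w\le D$.
\item Interpolate linearly between consecutive nodes on $[e^{-1},1]$.
\end{itemize}
Consecutive nodes satisfy $z_w-z_{w+1}=e^{-w/D}(1-e^{-1/D})\ge e^{-1}(1-e^{-1/D})\ge c/D$ for $w<D$. Each linear piece therefore has slope at most $2D/c$, so $g$ is $O(D)$-Lipschitz on $[0,1]$. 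This holds however often $u$ oscillates, and $|g|\le1$.

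\textbf{Jackson step.} Jackson's theorem on $[0,1]$ gives a polynomial $p$ of degree $m$ with $\|p-g\|_\infty\le K\,\mathrm{Lip}(g)/m$ for an absolute constant $K$. Choosing $m=\lceil C_1D\rceil$ with $C_1$ large makes this error at most $\eta_0$, for any fixed absolute $\eta_0<1/100$ (say $\eta_0=1/200$). Then:
\begin{itemize}
\item at the nodes $w<D$ we get $|q(w)-u(w)|=|p(z_w)-g(z_w)|\le\eta_0$;
\item for $w\ge D$ we have $z_w\le e^{-1}$, where $g=0$, so $|q(w)|\le\eta_0$;
\item $\|p\|_{L_\infty([0,1])}\le 1+\eta_0\le2$.
\end{itemize}
\Cref{lem:coefficient-mass} then gives $\sum|\alpha_j|\le 8\cdot7^{C_1D+1}\le2^{CD}$ after enlarging $C$, and $m\le CD$ holds as well. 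The rates $r_j=e^{-j/D}$ lie in $(0,1]$, consistent with \cref{lem:exponential-atom}.

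\textbf{Main obstacle.} The one point needing care is that Jackson's error is uniform on the whole interval $[0,1]$, including the region near $0$ where infinitely many layers $w\ge D$ accumulate. Uniform approximation handles this automatically, so no layer-by-layer union argument is needed. The remaining check is that the constants in the Lipschitz bound and the Jackson bound are absolute. That amounts to the elementary inequality $1-e^{-1/D}\ge 1/(2D)$ for $D\ge1$, together with the standard Jackson constant $\omega(g,1/m)$-type bound $\|p-g\|\le 6\,\omega(g,1/(2m))$ rescaled to an interval of length one.
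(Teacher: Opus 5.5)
Your proposal is correct and follows the paper's proof essentially step for step. You use the substitution $z_w=e^{-w/D}$, the piecewise-linear interpolant vanishing on $[0,e^{-1}]$ with $O(D)$ Lipschitz constant coming from the node gap $e^{-w/D}(1-e^{-1/D})\ge c/D$, Jackson's theorem at degree $O(D)$, and \cref{lem:coefficient-mass} with $B=2$, exactly as the paper does; your explicit constants ($1-e^{-1/D}\ge 1/(2D)$ and the rescaled Jackson bound) just fill in details the paper leaves implicit.
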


\begin{proof}
Put $\sigma=e^{-1/D}$ and $z_w=\sigma^w$.
Define $h:[0,1]\to[-1,1]$ by setting $h=0$ on $[0,z_D]$, requiring
$h(z_w)=u(w)$ for $0\le w\le D$, and interpolating linearly between
successive nodes.
For $0\le w<D$,
\[
 z_w-z_{w+1}
 =e^{-w/D}(1-e^{-1/D})
 \ge\frac cD
\]
for an absolute $c>0$.
Thus $h$ is $CD$-Lipschitz.
The algebraic Jackson inequality~\cite{DeVoreLorentz1993} gives a polynomial
$p$ of degree at most $CD$ such that
$\|p-h\|_{L_\infty([0,1])}\le\eta_0$.
Write $p(z)=\sum_{j=0}^m\alpha_jz^j$.
Since $\|p\|_\infty\le2$, \cref{lem:coefficient-mass} gives
\eqref{eq:constant-edge-mass}.
Finally,
\[
 p(\sigma^w)
 =\sum_{j=0}^m\alpha_j(\sigma^j)^w,
\]
and $h(\sigma^w)=u(w)$ for $w<D$, while $h(\sigma^w)=0$ for every integer
$w\ge D$.
\end{proof}

The edge corrections used below take values in $\{-1,0,1\}$.
We next reduce the constant error in \cref{lem:constant-edge}
exponentially in the degree of an output polynomial.

\begin{lemma}[Ternary output amplifier]
\label{lem:ternary-amplifier}
There is an absolute constant $C>0$ with the following property.
Let $u:X\to\{-1,0,1\}$ and $q:X\to\mathbb R$ satisfy
$\|q-u\|_\infty\le\eta_0$.
For every $0<\delta<1/2$, there is a univariate polynomial $A$ of degree at
most $C\log(2/\delta)$ such that
\[
 \|A\circ q-u\|_\infty\le\delta.
\]
If
\[
 q(w)=\sum_{j=1}^L\alpha_jr_j^w,
 \qquad
 W=\max\left\{1,\sum_j|\alpha_j|\right\},
\]
then $A(q(w))$ is a signed exponential sum whose rates are products of the
$r_j$, with coefficient mass and number of terms at most
\begin{equation}
 (C(W+1))^{C\log(2/\delta)}
 \quad\text{and}\quad
 (C(L+1))^{C\log(2/\delta)},
\label{eq:amplifier-complexity}
\end{equation}
respectively.
\end{lemma}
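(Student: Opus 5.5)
The plan is to build $A$ in two stages: first compress the three target values to exactly $\{-1,0,1\}$ by a fixed low-degree map, then sharpen by composing with an amplifying polynomial. Concretely, since $q(x)$ lies within $\eta_0<1/100$ of $u(x)\in\{-1,0,1\}$, we separate the three values through the pair of indicator-like quantities $q^2$ (which is near $0$ or near $1$) and $q$ itself. Let $M_k$ be the degree-$(2k+1)$ amplification polynomial
\[
 M_k(y)=\sum_{i\ge k+1}\binom{2k+1}{i}\Bigl(\tfrac{1+y}{2}\Bigr)^{i}\Bigl(\tfrac{1-y}{2}\Bigr)^{2k+1-i}\cdot 2-1,
\]
i.e.\ the majority-of-$(2k+1)$ polynomial, which maps $[1-2\gamma,1]$ into $[1-\delta,1]$ and $[-1,-1+2\gamma]$ into $[-1,-1+\delta]$ once $k\ge C\log(2/\delta)$ for a fixed constant $\gamma$. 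Then set $e(x)=(1+M_k(2\cdot\tfrac{q^2}{1}\cdot\text{rescaled}-1))/2$, an approximation of $u^2\in\{0,1\}$ to within $\delta/4$, and $s(x)=M_k(\beta q)$ for a suitable constant $\beta$ mapping $[-1-\eta_0,-1+\eta_0]$ and $[1-\eta_0,1+\eta_0]$ into the amplifier's basins; since $s$ is bounded by $1+\delta$ wherever $|q|\le 1+\eta_0$, and $u=u^2\cdot\operatorname{sgn}$, we take $A(q)=e\cdot s$ (or, to avoid controlling $s$ near $0$, note that $e\approx0$ there and $|s|\le 2$ on the whole relevant interval because $M_k$ is bounded by $1$ on $[-1,1]$, after rescaling $\beta q$ into $[-1,1]$). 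The error is then at most $\delta$ after adjusting constants, and $\deg A\le C\log(2/\delta)$.

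The key step to check is that the intermediate rescalings send each neighborhood into $[-1,1]$ with a constant margin away from the amplifier's threshold $0$: for $e$, use $y=2(q^2/(1+\eta_0)^2)-1$, which is within $0.05$ of $-1$ when $u=0$ and within $0.05$ of $1$ when $|u|=1$; for $s$, use $y=q/(1+\eta_0)$, which is within $0.05$ of $\pm1$ when $u=\pm1$ and always lies in $[-1,1]$. The majority polynomial's error on $[1-2\gamma,1]$ is $e^{-\Omega(k)}$ by Chernoff, giving the degree bound; I expect this margin bookkeeping to be the only delicate point.

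For the complexity statement, write $A(y)=\sum_{i\le d}a_iy^i$ with $d\le C\log(2/\delta)$. Each coefficient of the composed polynomial is bounded by $C^{d}$ since the maps are fixed compositions of binomial expansions with constant-coefficient affine substitutions, so $\sum_i|a_i|\le C^{C\log(2/\delta)}$. Substituting $q(w)=\sum_j\alpha_jr_j^w$, each power $q(w)^i$ expands by the multinomial theorem into exponential profiles with rates $\prod r_{j}$ and coefficient mass $(\sum_j|\alpha_j|)^i\le W^i$, with at most $\binom{L+i}{i}\le(L+1)^i$ distinct rates. Summing over $i\le d$ gives mass at most $\sum_i|a_i|W^i\le (C(W+1))^{C\log(2/\delta)}$ and at most $(d+1)(L+1)^d\le (C(L+1))^{C\log(2/\delta)}$ terms, as claimed in \eqref{eq:amplifier-complexity}.
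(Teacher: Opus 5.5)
Your proof is correct and follows the same strategy as the paper: a majority (Bernstein) polynomial with Chernoff error $e^{-\Omega(k)}$ is composed with a fixed low-degree decoding of the three values. You then expand powers of $q$ using $r_i^wr_j^w=(r_ir_j)^w$, which gives mass $2^{O(k)}W^{O(k)}$ and at most $(L+1)^{O(k)}$ rates.

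The only variation is the decoding. You write $u=u^2\cdot\operatorname{sgn}u$ and multiply a $q^2$-based nonzero detector by a sign amplifier, which is why you need $|M_k|\le1$ on $[-1,1]$ to handle $u=0$. The paper instead uses the difference $H_k(t_+(q))-H_k(t_-(q))$ of two Lagrange-selector indicators, each of which is amplified at all three values. One harmless slip: your stated margin of $0.05$ should be about $0.08$ when $\eta_0$ is near $1/100$.
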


\begin{proof}
For odd $k$, let
\[
 H_k(t)=
 \sum_{j=(k+1)/2}^k\binom kjt^j(1-t)^{k-j}.
\]
Chernoff's inequality gives
\[
 H_k(t)\le e^{-ck}\quad(0\le t\le1/5),
 \qquad
 H_k(t)\ge1-e^{-ck}\quad(4/5\le t\le1)
\]
for an absolute constant $c>0$.

Put
\[
 s_+(z)=\frac{z(z+1)}2,\qquad
 s_-(z)=\frac{z(z-1)}2,
\]
and
\[
 t_\pm(z)=\frac1{10}+\frac45s_\pm(z).
\]
For a sufficiently small absolute $\eta_0$, if $z$ lies within $\eta_0$ of
$\{-1,0,1\}$, then each $t_\pm(z)$ lies in $[0,1]$.
Moreover, $t_+(z)\ge4/5$ precisely near $1$ and is at most $1/5$ near
$0,-1$; the analogous statement holds for $t_-$ and the neighborhood of
$-1$.
Thus
\[
 A(z)=H_k(t_+(z))-H_k(t_-(z))
\]
has error at most $2e^{-ck}$ on the three neighborhoods.
Taking $k=O(\log(2/\delta))$ proves the approximation claim.

The polynomial $A$ has degree at most $2k$ and coefficient mass
$2^{O(k)}$ in its displayed arithmetic construction.
The exponential coefficient mass of each $t_\pm(q)$ and $1-t_\pm(q)$ is at
most $C(W+1)^2$, and the number of their exponential terms is at most
$C(L+1)^2$.
Expanding the Bernstein sum for $H_k$ proves
\eqref{eq:amplifier-complexity}.
Products stay inside the exponential dictionary because
$r_i^wr_j^w=(r_ir_j)^w$.
\end{proof}

\begin{corollary}[Small-error exponential edge profile]
\label{cor:small-edge}
Let $u:\{0,\ldots,D\}\to\{-1,0,1\}$ satisfy $u(D)=0$.
For every $0<\delta<1/2$, there is a signed exponential sum $q_\delta$
satisfying
\begin{equation}
 |q_\delta(w)-u(w)|\le\delta\quad(0\le w<D),
 \qquad
 |q_\delta(w)|\le\delta\quad(w\ge D),
\label{eq:small-edge-error}
\end{equation}
whose coefficient mass and number of terms are at most
\begin{equation}
 2^{CD\log(2/\delta)}.
\label{eq:small-edge-complexity}
\end{equation}
Every nonconstant rate in the sum has the form $e^{-s/D}$ for an integer
$s\ge1$.
\end{corollary}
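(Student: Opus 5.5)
The plan is to compose \cref{lem:constant-edge} with \cref{lem:ternary-amplifier}. The ternary-valued $u$ is a special case of the $[-1,1]$-valued data allowed in \cref{lem:constant-edge}. That lemma produces $q(w)=\sum_{j=0}^m\alpha_jr_j^w$ with $m\le CD$, rates $r_j=e^{-j/D}$, and coefficient mass at most $2^{CD}$. It also achieves error at most $\eta_0$ against the extended target
\[
\tilde u(w)=u(w)\ (w<D),\qquad \tilde u(w)=0\ (w\ge D),
\]
on the set $X=\{0,1,2,\ldots\}$ of all nonnegative integers, or on $\{0,\ldots,n\}$ if one prefers a finite domain. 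Since $\tilde u$ takes values in $\{-1,0,1\}$, the hypotheses of \cref{lem:ternary-amplifier} hold with $L=m+1\le CD+1$ terms and $W\le\max\{1,2^{CD}\}=2^{CD}$.

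Next, apply \cref{lem:ternary-amplifier} with the given $\delta$ and set $q_\delta=A\circ q$. Then $\|q_\delta-\tilde u\|_\infty\le\delta$ on $X$, which is exactly \eqref{eq:small-edge-error}. By \eqref{eq:amplifier-complexity}, the coefficient mass is at most $(C(2^{CD}+1))^{C\log(2/\delta)}=2^{O(D\log(2/\delta))}$, because $\log(C(2^{CD}+1))=O(D)$ for $D\ge1$. Likewise the number of terms is at most $(C(CD+2))^{C\log(2/\delta)}=2^{O(\log D\cdot\log(2/\delta))}\le2^{O(D\log(2/\delta))}$. After renaming the absolute constant, both counts fit the bound \eqref{eq:small-edge-complexity}. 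Terms with equal rates may be merged, which can only decrease both the count and the mass.

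For the rate structure, \cref{lem:ternary-amplifier} says each rate of $A\circ q$ is a product of the $r_j$. The empty product (the constant term of $A$) gives rate $1$. A product $r_{j_1}\cdots r_{j_k}=e^{-(j_1+\cdots+j_k)/D}$ is $e^{-s/D}$ with $s$ a nonnegative integer, and $s\ge1$ unless every factor is $r_0=1$, in which case the rate is the constant $1$. So every nonconstant rate has the stated form.

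The only point needing care is that the amplifier's mass bound is polynomial in $W$ raised to the power $O(\log(2/\delta))$. Exponentiating $W=2^{CD}$ therefore multiplies exponents, giving $D\log(2/\delta)$, rather than the additive or $D/\delta$ behaviour a direct Jackson approximation would give. One must also check that the amplifier's guarantee is pointwise on the whole domain $X$, including all $w\ge D$. This holds because \cref{lem:ternary-amplifier} is stated for an arbitrary set $X$.
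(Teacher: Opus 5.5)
Your proposal is correct and follows the paper's proof. Like the paper, you compose \cref{lem:constant-edge} (giving $L=O(D)$ terms and mass $W\le 2^{O(D)}$) with \cref{lem:ternary-amplifier}, bound both $(C(W+1))^{C\log(2/\delta)}$ and $(C(L+1))^{C\log(2/\delta)}$ by $2^{O(D\log(2/\delta))}$, and observe that products of the rates $e^{-j/D}$ have the form $e^{-s/D}$. Your extension of $u$ by zero to all $w\ge D$ and your treatment of the empty product as the constant rate spell out details the paper leaves implicit.
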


\begin{proof}
\Cref{lem:constant-edge} gives $W\le2^{O(D)}$ and $L=O(D)$.
Apply \cref{lem:ternary-amplifier} and use
\[
 (CD)^{O(\log(2/\delta))}
 \le 2^{O(D\log(2/\delta))}.
\]
Products of the base rates $e^{-j/D}$ have the asserted form.
\end{proof}

\section{Sparse realization of exponential atoms}\label{sec:sparsification}

Identity \eqref{eq:exponential-identity} realizes $r^w$ with unit weight but
may use many subcubes.
The next lemma gives a uniformly accurate sparse realization.

\begin{lemma}[Empirical exponential atom]
\label{lem:empirical-atom}
There is an absolute constant $C>0$ such that, for every $n\ge1$,
$r\in(0,1)$, and $0<\delta<1/2$, there are subsets
$S_1,\ldots,S_L\subseteq[n]$ with
\begin{equation}
 L\le
 C\delta^{-2}
 \left[
 \left(1+\frac{\ln(1/\delta)}{-\ln r}\right)\log(n+1)+1
 \right]
\label{eq:empirical-size}
\end{equation}
such that
\[
 R(x)=\frac1L\sum_{\ell=1}^L
 \prod_{i\in S_\ell}(1-x_i)
\]
satisfies
\begin{equation}
 \|R-r^{|x|}\|_\infty\le2\delta.
\label{eq:empirical-error}
\end{equation}
In particular, $\operatorname{gspar}(R)\le L$ and
$\operatorname{gwt}(R)\le1$.
The complemented construction approximates $r^{n-|x|}$ with the same
bounds.
\end{lemma}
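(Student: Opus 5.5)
We will realize $r^{|x|}$ as the expectation of a random negative subcube and then sample. Let $S\subseteq[n]$ be random, with each coordinate included independently with probability $1-r$. Then
\[
 \mathbb E_S\prod_{i\in S}(1-x_i)=\Pr[S\cap\operatorname{supp}(x)=\varnothing]=r^{|x|},
\]
which is the probabilistic reading of \eqref{eq:exponential-identity}. Draw $S_1,\ldots,S_L$ independently from this distribution and form $R$ as in the statement. Each summand is a generalized monomial with only negative literals and coefficient $1/L$, so $\operatorname{gspar}(R)\le L$ and $\operatorname{gwt}(R)\le1$ hold automatically. It remains to show that, for the stated $L$, with positive probability $|R(x)-r^{|x|}|\le2\delta$ for all $x$ at once.

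\emph{Step 1: monotone tail.} Let $w_0=\lceil \ln(1/\delta)/(-\ln r)\rceil$, so that $r^{w}\le\delta$ for $w\ge w_0$. Every summand $\prod_{i\in S_\ell}(1-x_i)$ is antitone in $x$, so $R$ is antitone. For any $x$ with $|x|\ge w_0$, pick $y\le x$ with $|y|=w_0$; then $0\le R(x)\le R(y)$. If $R(y)\le r^{w_0}+\delta\le2\delta$ is already guaranteed on layer $w_0$, then for $x$ we get $|R(x)-r^{|x|}|\le\max\{R(x),r^{|x|}\}\le2\delta$. So only the inputs with $|x|\le w_0$ need control. Here $|x|\le w_0$ is used at the upper endpoint.

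\emph{Step 2: union bound on low layers.} By Hoeffding, for a fixed $x$ we have $\Pr[|R(x)-r^{|x|}|>\delta]\le2e^{-2L\delta^2}$. The number of inputs of weight at most $w_0$ is at most $\sum_{w\le w_0}\binom nw\le(n+1)^{w_0}$. The union bound therefore succeeds once $2L\delta^2>\ln 2+w_0\ln(n+1)$. Since $w_0\le1+\ln(1/\delta)/(-\ln r)$, this holds for $L$ as in \eqref{eq:empirical-size} with a suitable absolute $C$; the trailing $+1$ in the bracket covers the $\ln2$ term. On the low layers this gives error $\delta\le2\delta$, and Step 1 handles the rest.

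\emph{Complemented case.} Apply the same construction to $\bar x$, using positive literals $\prod_{i\in S_\ell}x_i$. This gives $r^{n-|x|}$ with identical bounds.

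The main obstacle is avoiding a union bound over all $2^n$ inputs. The monotonicity reduction in Step 1 is exactly what replaces the count $2^n$ by $(n+1)^{w_0}$, and it requires all sampled monomials to be one-sided so that $R$ is antitone. This is also why the statement uses only negative literals.
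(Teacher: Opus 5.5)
Your proposal is correct and follows essentially the same route as the paper: sample each $S_\ell$ by including coordinates independently with probability $1-r$, apply Hoeffding and a union bound over the at most $(n+1)^{K}$ inputs of weight at most $K=\lceil \ln(1/\delta)/(-\ln r)\rceil$, and use antitonicity of $R$ (via a $K$-subset $T\subseteq X$) to control every heavier input. Your explicit handling of the $\ln 2$ term and of $K\le 1+\ln(1/\delta)/(-\ln r)$ spells out the constant absorption that the paper leaves implicit.
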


\begin{proof}
Sample each $S_\ell$ independently by including every coordinate with
probability $1-r$.
If $X=\{i:x_i=1\}$, then
\[
 \mathbb E\prod_{i\in S_\ell}(1-x_i)
 =\Pr[S_\ell\cap X=\varnothing]
 =r^{|X|}.
\]
Put
\[
 K=\left\lceil\frac{\ln(1/\delta)}{-\ln r}\right\rceil.
\]
There are at most $(n+1)^K$ inputs of weight at most $K$.
Hoeffding's inequality and a union bound show that $L$ as in
\eqref{eq:empirical-size} can be chosen so that
\[
 |R(x)-r^{|x|}|\le\delta
\]
simultaneously on all such inputs.

If $|X|>K$, fix a $K$-subset $T\subseteq X$.
Every sampled monomial that survives on $x$ also survives on $1_T$, so
\[
 0\le R(x)\le R(1_T)\le r^K+\delta\le2\delta.
\]
Also $0\le r^{|X|}\le r^K\le\delta$, proving
\eqref{eq:empirical-error}.
The weight assertion follows because $R$ is an average of generalized
monomials.
Complementing all input bits proves the final statement.
\end{proof}

\begin{lemma}[Sparsifying an amplified exponential sum]
\label{lem:sparse-sum}
Suppose
\[
 q(w)=\sum_{j=1}^A\alpha_jr_j^w
\]
approximates a lower-edge profile to error $\delta$, has coefficient mass
$W\ge1$, and every nonconstant rate satisfies $-\ln r_j\ge1/D$.
If
\[
 A,W\le2^{CD\log(2/\delta)},
\]
then $q(|x|)$ can be replaced by a generalized polynomial $Q$ with total
error at most $2\delta$ and
\begin{equation}
 \operatorname{gspar}(Q)
 \le2^{C'D\log(2/\delta)}\log(n+1),
 \qquad
 \operatorname{gwt}(Q)\le W.
\label{eq:sparse-sum}
\end{equation}
The same holds for an upper-edge profile in the coordinate $n-w$.
\end{lemma}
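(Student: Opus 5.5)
We replace each exponential term $\alpha_jr_j^{|x|}$ by a sparse empirical realization from \cref{lem:empirical-atom}, keeping the coefficient $\alpha_j$ unchanged. The errors are scaled by $|\alpha_j|$, so we run \cref{lem:empirical-atom} at a per-atom accuracy $\delta_1=\delta/(2W)$. Then
\[
 \Bigl|\sum_j\alpha_j\bigl(R_j(x)-r_j^{|x|}\bigr)\Bigr|\le\sum_j|\alpha_j|\cdot2\delta_1\le\delta .
\]
Combined with the original error $\delta$ of $q$ against the profile, the total error is at most $2\delta$. A constant rate $r_j=1$ needs no sampling, since $1^w$ is the constant monomial. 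This also covers any rate in $(0,1)$ too close to one to matter, since all nonconstant rates satisfy $-\ln r_j\ge1/D$ by hypothesis.

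Next we define $Q=\sum_j\alpha_jR_j$. Each $R_j$ is an average of generalized monomials, so $\operatorname{gwt}(R_j)\le1$. The triangle inequality then gives $\operatorname{gwt}(Q)\le\sum_j|\alpha_j|=W$, which is the weight half of \eqref{eq:sparse-sum}. Merging identical monomials across different $j$ can only lower both sparsity and weight.

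For sparsity, we bound each $L_j$ using \eqref{eq:empirical-size} with $\delta_1$ and $-\ln r_j\ge1/D$:
\[
 L_j\le C\delta_1^{-2}\Bigl[\bigl(1+D\ln(1/\delta_1)\bigr)\log(n+1)+1\Bigr].
\]
Since $W\le2^{CD\log(2/\delta)}$, we have $\delta_1^{-2}=4W^2\delta^{-2}\le2^{O(D\log(2/\delta))}$. We also have $\ln(1/\delta_1)=O(D\log(2/\delta))$, so the factor $1+D\ln(1/\delta_1)$ is polynomial in $D$ and $\log(2/\delta)$ and is absorbed into $2^{O(D\log(2/\delta))}$. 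The trailing $+1$ is absorbed by $\log(n+1)\ge1$ for $n\ge1$. Summing over the $A\le2^{CD\log(2/\delta)}$ terms gives $\operatorname{gspar}(Q)\le\sum_jL_j\le2^{C'D\log(2/\delta)}\log(n+1)$.

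The step that needs care is keeping the scaling consistent. The accuracy loss $\delta_1\approx\delta/W$ must cost only a $W^2$ factor in sample size, which is still within $2^{O(D\log(2/\delta))}$. It must not enter the exponent in a way that yields, say, $2^{D^2}$. The $K$-cutoff in \cref{lem:empirical-atom} scales like $D\log(W/\delta)$, and this multiplies the $\log(n+1)$ factor only polynomially, so the single $\log(n+1)$ factor survives.

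For the upper-edge case we apply the complemented form of \cref{lem:empirical-atom} to each $r_j^{n-|x|}$. This amounts to complementing every input bit, which preserves all counts and weights.
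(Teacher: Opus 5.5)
Your proposal is correct and matches the paper's proof. Both keep the rate-one term exactly and replace each remaining profile by the empirical atom at accuracy proportional to $\delta/W$; you take $\delta/(2W)$ and the paper $\delta/(4W)$, and either keeps the total error within $2\delta$. Both then absorb the $W^2\delta^{-2}$ and $O(D^2\log(2/\delta))$ factors into $2^{O(D\log(2/\delta))}$ before summing over the $A$ terms.
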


\begin{proof}
Keep the rate-one term exactly.
For every other term, apply \cref{lem:empirical-atom} with
\[
 \zeta=\frac{\delta}{4W}.
\]
The replacement error is at most $2\zeta W\le\delta/2$, and the weight does
not increase.
Moreover,
\[
 \ln(1/\zeta)=O(D\log(2/\delta)),
 \qquad
 \frac{\ln(1/\zeta)}{-\ln r_j}
 =O(D^2\log(2/\delta)).
\]
Summing \eqref{eq:empirical-size} over the $A$ profiles and absorbing
polynomial factors in $D$ and $\log(2/\delta)$ into the exponential proves
\eqref{eq:sparse-sum}.
The upper-edge statement follows by complementing all input bits.
\end{proof}

\section{Proof of the upper bounds}\label{sec:upper-proof}

\begin{proof}[Proof of the upper bound in
\cref{thm:weight-classification} and of \cref{thm:sparsity-upper}]
Let $D=D(F)$.
The case $D=0$ is immediate.
If $n<2D$, use the exact singleton representation, whose support and weight
are at most $2^n\le2^{2D}$.
We may therefore assume that $n\ge2D\ge2$.
There is no transition at any
$t\in\{D+1,\ldots,n-D\}$, so $F$ is constant on
$\{D,D+1,\ldots,n-D\}$.
Let this value be $\beta$.
Define two edge profiles on $\{0,1,\ldots,D\}$:
\[
\begin{aligned}
 u_-(w)&=
 \begin{cases}
 F(w)-\beta,&0\le w<D,\\
 0,&w=D,
 \end{cases}\\
 u_+(w)&=
 \begin{cases}
 F(n-w)-\beta,&0\le w<D,\\
 0,&w=D.
 \end{cases}
\end{aligned}
\]
Both take values in $\{-1,0,1\}$.

Apply \cref{cor:small-edge} to $u_-$ in the coordinate $w$, and to $u_+$ in
the coordinate $n-w$, each with error $\eps/2$.
Their sum with the constant $\beta$ approximates $F$ to error at most
$\eps$.
\Cref{lem:exponential-atom} and \eqref{eq:small-edge-complexity} give
\[
 \widetilde{\operatorname{gwt}}_\eps(f_F)
 \le2^{O(D\log(2/\eps))}.
\]

For sparsity, apply \cref{lem:sparse-sum} to the two amplified edge profiles
with error parameter $\eps/4$.
Each edge then has total error at most $\eps/2$, so their sum with $\beta$
has error at most $\eps$.
The resulting approximant satisfies
\[
 \operatorname{gspar}
 \le2^{O(D\log(2/\eps))}\log(n+1),
 \qquad
 \operatorname{gwt}
 \le2^{O(D\log(2/\eps))}.
\]

Finally suppose $D=1$.
For $n\ge2$, $F$ is constant, say equal to $\beta$, on weights
$1,\ldots,n-1$, and
\[
 f_F(x)=
 \beta
 +(F(0)-\beta)\prod_{i=1}^n(1-x_i)
 +(F(n)-\beta)\prod_{i=1}^nx_i.
\]
Thus exact generalized sparsity and weight are at most three.
The case $n=1$ is immediate.
\end{proof}

\section{Coordinate signatures}\label{sec:signatures}

\begin{proof}[Proof of \cref{thm:signature-lower}]
Let
\[
 p(x)=\sum_{j=1}^sa_jm_{P_j,N_j}(x)
\]
be any $\eps$-approximant to $f_F$.
Associate with each coordinate $i\in[n]$ its signature
\[
 \tau(i)\in\{+,-,0\}^s,
\]
whose $j$th entry records whether $i\in P_j$, $i\in N_j$, or neither.
There are at most $3^s$ signatures.

Choose a transition $t$ attaining depth $D$.
Then either $t=D$ or $t=n-D+1$.
Suppose some signature class $C$ has more than $D$ coordinates.
First consider $t=D$.
Choose $A\subset B\subset C$ with $|A|=D-1$ and $|B|=D$, and compare the
inputs $1_A$ and $1_B$.

Consider one generalized monomial.
If the common signature of $C$ in this monomial is $0$, it has the same
value on the two inputs.
If the signature is $+$, some coordinate of $C\setminus B$ is a positive
literal set to zero, so the monomial vanishes on both inputs.
If the signature is $-$, both $A$ and $B$ are nonempty because $D\ge2$, so
a negative literal is falsified on both inputs.
Thus every monomial, and hence $p$, has the same value on $1_A$ and $1_B$.
This contradicts $F(D-1)\ne F(D)$ and $\eps<1/2$.

If $t=n-D+1$, use the complemented inputs whose zero sets are $A$ and $B$.
The same argument applies with positive and negative literals interchanged.
Therefore every signature class has size at most $D$, so
\[
 n\le D\,3^s,
\]
which proves \eqref{eq:signature-lower}.
\end{proof}

\begin{proof}[Proof of \cref{cor:log-sparsity}]
The restriction-profile lower bound and \cref{thm:signature-lower} give
\[
 \log\!\left(
 \widetilde{\operatorname{gspar}}_\eps(f_F)+1
 \right)
 \ge\Omega_\eps(D)
\]
and
\[
 \log\!\left(
 \widetilde{\operatorname{gspar}}_\eps(f_F)+1
 \right)
 \ge\Omega(\log\log(n/D+2)).
\]
If $D\ge\frac12\log\log(n+2)$, the first bound dominates.
Otherwise $\log(n/D)=\Theta(\log n)$, and the second bound dominates.
The upper side follows by taking logarithms in
\eqref{eq:sparsity-upper}.
The $D=1$ representation was proved in \cref{sec:upper-proof}.
\end{proof}

\section{Quantum query complexity}\label{sec:quantum}

Paturi's theorem gives
\begin{equation}
 \widetilde{\deg}_{1/3}(f_F)
 =\Theta(\sqrt{nD(F)}).
\label{eq:paturi-form}
\end{equation}
Bounded-error quantum query complexity has the same asymptotic form:
\begin{equation}
 Q_{1/3}(f_F)=\Theta(\sqrt{nD(F)}).
\label{eq:quantum-form}
\end{equation}
The lower bound follows from the polynomial method
of Beals et al.~\cite{BealsEtAl2001}: the acceptance probability of a
$T$-query algorithm is a polynomial of degree at most $2T$.
For completeness, the upper bound follows from quantum approximate
counting~\cite{BrassardHoyerMoscaTapp2002}.
With $T=C\sqrt{nD}$ queries, its estimate $\widetilde w$ of the number $w$
of one-coordinates obeys, with constant success probability,
\begin{equation}
 |\widetilde w-w|
 =O\!\left(
 \frac{\sqrt{w(n-w)}}T+\frac n{T^2}
 \right).
\label{eq:counting-error}
\end{equation}
For a sufficiently large absolute constant $C$, this recovers $w$ exactly
when $w<D$ and distinguishes that case from $w\ge D$.
Apply the same procedure to the zero-coordinates.
It recovers the exact Hamming weight whenever it lies within distance
$D-1$ of an endpoint; otherwise it certifies that the weight lies in the
transition-free middle interval, where $F$ is constant.
A constant number of repetitions makes the two calls simultaneously correct
with probability at least $2/3$, at total cost $O(\sqrt{nD})$.

\begin{proof}[Proof of \cref{cor:query-form}]
Equation \eqref{eq:paturi-form} and
\cref{thm:weight-classification} give the first equality in
\eqref{eq:query-weight}; \eqref{eq:quantum-form} gives the second.
Combining \eqref{eq:quantum-form} with \cref{cor:log-sparsity} proves
\eqref{eq:query-sparsity}.
\end{proof}

\section{Scope and open problems}\label{sec:scope}

\Cref{thm:weight-classification} completely classifies approximate
signed-subcube weight for symmetric Boolean functions.
\Cref{cor:log-sparsity} gives a complete classification of approximate
signed-subcube sparsity after taking logarithms.
The error-amplification argument improves the uniform upper dependence from
$2^{O(D/\eps)}$ to $(2/\eps)^{O(D)}$, and
\cref{thm:dual-transfer} supplies a general dual witness with a
subcube-discrepancy interpretation.

Several gaps remain.
First, on the original support scale the bounds are
\[
 \max\{2^{\Omega_\eps(D)},\log_3(n/D)\}
 \le
 \widetilde{\operatorname{gspar}}_\eps(f_F)
 \le
 2^{O(D\log(2/\eps))}\log(n+1).
\]
Thus there is a genuine product-versus-maximum gap even though the logarithms
match.
Determining the original-scale answer for a natural family such as
$\operatorname{EXACT}_D$ or a single interval remains open.

Second, the upper error dependence is polynomial in $1/\eps$ with exponent
$O(D)$, but no matching lower dependence is known.
Exact representation reintroduces dimension dependence, so the correct
interpolation as $\eps\downarrow0$ may involve all three parameters
$D,n,\eps$.

Third, the coefficient estimate $4B\,7^m$ is convenient rather than sharp.
Optimizing the exponential base and giving simpler closed-form global dual
witnesses are worthwhile quantitative problems.

Finally, the transfer theorem is not limited to symmetric functions.
Its present application uses symmetric sections because Paturi's theorem
provides linear approximate degree at every centered leaf.
Finding a genuinely nonsymmetric class with an exponential restriction
profile not explained by a hard symmetric section is the main structural
problem.
On the applications side, \cref{subsec:discrepancy} gives a precise
obstruction for conjunction weak learners.
Turning it into a computational lower bound in a standard learning model, or
into a communication-complexity lifting theorem, would provide a stronger
external consequence.

\section*{Acknowledgments}

During the preparation of this work, the author used OpenAI's GPT model as an interactive research assistant to explore proof strategies, check intermediate derivations, and help refine theorem statements and proofs, as well as to improve the readability, structural flow, and \LaTeX{} formatting of the manuscript. 

\bibliographystyle{alpha}
\bibliography{references}

\end{document}